\definecolor{light}{gray}{.50}
\begin{document}


\newtheorem{thm}{Theorem}
\newtheorem{cor}{Corollary}
\newtheorem{lem}{Lemma}
\newtheorem{prop}{Proposition}

\newcommand{\noprint}[1]{}

\def\I{\mathbf 1}
\def\K{\mathbf K}
\def\R{\mathbf R}
\def\U{\mathbf U}
\def\X{\mathbf X}
\def\Y{\mathbf Y}

\def\0{\mathbf 0}
\def\1{\mathbf 1}

\def\b{\mathbf b}
\def\h{\mathbf h}
\def\r{\mathbf r}
\def\s{\mathbf s}
\def\t{\mathbf t}
\def\u{\mathbf u}
\def\v{\mathbf v}
\def\w{\mathbf w}
\def\x{\mathbf x}
\def\y{\mathbf y}

\def\one{\mathbf 1}
\def\verr{\mathbf\varepsilon}

\def\F{\mathcal F}
\def\H{\mathcal H}
\def\N{\mathcal N}
\def\P{\mathcal P}
\def\S{\mathcal S}

\def\rt{\rightarrow}
\title{Local additive estimation}
\date\today

\author{ Juhyun Park and Burkhardt Seifert \thanks{Juhyun Park was Postdoctoral Research
Fellow (Email:{\it juhyun.park@lancaster.ac.uk}), Burkhardt Seifert is Professor
(Email:{\it seifert@ifspm.uzh.ch}) in Biostatistics unit, Institute for Social and
Preventive Medicine, University of Z\"urich, Z\"urich, Switzerland. Funding for this work
was provided by Swiss National Science Foundation grant 20020-103743.} \\ Lancaster
University and University of Z\"urich}
\date{\today}

\maketitle

\begin{abstract}
Additive models are popular in high--dimensional regression problems because of
flexibility in model building and optimality in additive function estimation.
Moreover, they do not suffer from the so-called {\it curse
of dimensionality} generally arising in nonparametric regression setting. Less known is
the model bias incurring from the restriction to the additive class of models. 
We introduce a new class of estimators that reduces additive model bias and at the same
time preserves some stability of the additive estimator. This estimator is shown to
partially relieve the dimensionality problem as well. The new estimator is constructed by
localizing the assumption of additivity and thus named {\it local additive estimator}.
Implementation can be easily made with any standard software for additive regression. For
detailed analysis we explicitly use the smooth backfitting estimator by Mammen, Linton
and Nielsen~(1999). 
\medskip
KEY WORDS: Nonparametric regression, additive models, backfitting, local polynomial smoothing. 

\end{abstract}

\section{Introduction} \label{sec:intro}


Application of additive models is numerous from econometrics, social sciences to
environmental sciences (Deaton and Muellbauer 1980; Hastie and Tibshirani
1990). Separability of each component is well suited for flexible and
interpretable model building in modern high dimensional problems with many
covariates.
The main advantage of additive regression is that it allows us to deal with
high-dimensional regression in one-dimensional precision.

Since the recognition of potential of additive models in 80s, several additive
estimators have been developed in various contexts of smoothing.
Earlier methods tend to be more algorithmic in nature because of nontrivial analyses
required to understand the behaviour of estimators (see Opsomer and Ruppert 1997; Opsomer
2000). 
More recent methods include marginal integration by Linton and
Nielsen~(1995) and smooth backfitting by Mammen, Linton and Nielsen~(1999). The smooth
backfitting estimator
(SBE) is shown to be oracle optimal for the additive function estimation, that is, it
achieves the same precision as in one--dimensional regression. The SBE is also applicable
when additivity is only approximately valid by means of a projection idea (Mammen
et al.~2001). 

Less known is the model bias incurring from the restriction to the additive class of
models. Additive models miss important (nonadditive) features by considering the
nonadditive part nuisance or noise.
This is also related to the fact that fitting additive models and diagnostics are less
trivial in that it involves various issues concerning model selection and
stability~(Breiman 1993).

Models without additive restriction fall in the broad category of nonparametric regression
models. Their properties have been well established in several earlier works, one of which
points out that local linear estimator is minimax optimal in more than one--dimensional
regression problem (Fan et al.~1997). However, as the dimension of the variables grows,
the stability of the estimation becomes increasingly an issue, which brings about {\it
curse of dimensionality} (see, e.g. Stone 1980, 1982). 

This situation leads to the question whether or how to combine advantages of those
estimators, the stability of additive estimator and the optimality of local linear one.
The approach proposed in Studer et al.~(2005) uses penalty to the nonadditive part,
which produces a family of {\it regularised} estimators. In this paper, we introduce
another class of estimators by {\it localizing} the additivity assumption and this
will be named {\it local additive estimator}.

Let $(\X,Y)$ be random variables of dimensions $d$ and 1, respectively and let
$(\X_i,Y_i), i=1, \cdots, n,$ be independent and identically distributed random variables
from $(\X,Y)$. Denote the design density of $\X$ by $f(\x)$. We assume that $\X$ has
compact support $[-1,1]^d$. The regression function $r(\x) = E[Y|\X = \x]$ is assumed to
be smooth.
The additive model has the relation
\begin{equation} \label{eq:radd}
r(\x) = r_{0} + r_{1}(x_1) + \cdots + r_{d}(x_d) \,.
\end{equation}
This is a global assumption on the shape of the regression function and thus quite
restrictive. 

Given $\x$, consider a $\w$-neighborhood of $\x$. If $||\w||$ is small enough, by Taylor theorem, we would have
\[
r(\x) \approx r_0 + r_1(x_1) + \cdots + r_d(x_d) \,.
\]
Note that this is not an {\it assumption} on the model. The accuracy of the approximation clearly depends on the $\w$-neighborhood. We will call this approximate additive relation {\it local additivity}.  

The above argument naturally leads to an estimator that can be constructed from additive estimator using data in the neighborhood of interest. 
For a given point $\x_0$, construct an additive estimator using data in the $\w$-neighborhood of $\x_0$. 
The new estimator is defined as the predictor of the additive estimator at $\x=\x_0$. 
This will be termed {\it local additive estimator}, denoted by $\hat r_{ladd}(\x_0)$. A
formal definition is given in Section~\ref{sec:main}.

By not directly imposing the additive restriction, we reduce model bias. On the other
hand, the merit of additivity that allows us to deal with high--dimensional regression in
one--dimensional precision is partially lost. The main advantages of the new estimator can
be summarized as follows. 1) Additivity is approximately valid {\it locally} even when the
true regression function is not additive. This helps keep bias small for general
regression function.
2) The local additive approximation is more flexible than the local linear one. 
Thus, the local region for the additive estimator can be chosen larger than that for the local linear one, which improves variance of the estimator. 
3) Standard software for additive estimators is directly applicable.

The paper is organized as follows. We formulate main results in
Section~\ref{sec:main}, followed by asymptotic
comparison to the local linear estimator, $\hat r_{ll}$, and the additive estimator, $\hat
r_{add}$ as an illustration. Smoothing parameter selection is also discussed. Numerical
studies are found in Section~\ref{sec:numeric} with an application to a real data example.
An extended version of simulation studies and some proofs of Section~\ref{subsec:sbe} are
found in Park and Seifert~(2008).










\section{Local additive estimation} \label{sec:main}

\subsection{Preliminaries}

Let $\x_0$ be a fixed interior output point. For $\w=(w_1,\cdots,w_d)$, we apply an additive estimator $\hat r_{add}$ using data in a $\w$-neighborhood of $\x_0$. 
Our analysis is based on $d$--dimensional rectangular region $[\x_0 \pm \w] = \{\X_i, \X_i
\in [\x_0 -\w, \x_0+\w]\}$. 
Denote the number of observations $\X_i$ in $[\x_0 \pm \w]$ by $\tilde n$. 
Properties of the local additive estimator can be developed by rescaling the region $[\x_0
\pm \w]$ to $[-1,1]^d$ and then using results known for $\hat r_{add}$. 
We will consider additive estimators that reach the optimal order $O(n^{-4/5})$.
For technical reasons, we will focus on linear estimators, which enable us to compute expectations under Taylor expansions.
The SBE by Mammen et al.~(1999) is known to be oracle optimal under general conditions,
and other estimators inherit this optimality under more special situations (Linton and
Nielsen 1995; Opsomer and Ruppert 1997; Opsomer 2000).
Throughout the article, we will assume that

\begin{itemize}
\item[$(A.1)$] The regression function $r$ and the design density $f$ are twice
continuously
differentiable.
\end{itemize}
The special case of uniform design will be separately dealt with later in this section.

When additive estimator is viewed as a componentwise one--dimensional smoother, it has
inherently a smoothing parameter associated with it. 
It may refer to smoothing window $h$ as in kernel smoothers, smoothing parameter $\lambda$ as in smoothing splines, or generally degrees of freedom $df$ as in equivalent linear smoothers. 
We will stick to $h$ for a smoothing parameter, as the local linear smoother is used later in our analysis. 
 
Suppose that all $w_j$'s are of same order. For simplicity of notation let $w_j=w$. Let $w
\rt 0$ and $h_j/w \rt 0$. Write 
\begin{equation}\label{e:u}
\U = \frac{\X-\x_0}{w} \,,
\end{equation}
for the rescaled random variable on $[-1,1]^d$ with density
\begin{eqnarray}
\tilde f(\u)  
&=& f(\x_0+w\u)/\int_{[-1,1]^d} f(\x_0+w\u)\,d\u \nonumber \\
&=& \frac{f(\x_0+w\u)}{2^df(\x_0)} + O(w^2) \label{e:ftilde}\,.
\end{eqnarray}
The corresponding regression function is
\begin{equation} \label{e:tr}
 \tilde r(\u) = r(\x_0+w\u)
\end{equation}
and the transformed bandwidth is
\begin{equation}
\tilde h_j = h_j/w \label{e:th} \,.
\end{equation}
The local additive estimator at $\x_0$ is defined as
$\hat r_{ladd}(\x_0)=\widehat{\tilde r}_{add}(\0)$.

Denote 1st and 2nd partial derivatives of $r$ by $r_j'(\x)$, $r_{j,k}''(\x)$ and
the $d\times d$ matrix of 2nd derivatives by $\r''$. 
$\hat r_{ll}(\x_0)$ and the local additive estimator by $\hat r_{ladd}(\x_0)$.
We write E, B, V, MSE, ISE, ASE, MISE and MASE for the conditional expectation, bias,
variance, mean squared error, integrated squared error, average squared error, integrated
mean squared error and average mean squared error, respectively. 
Define a matrix norm $||\cdot||$ for a symmetric matrix $A = \{a_{ij}\}$ as $||A|| = max_{i,j}|a_{ij}|$ and write $||\cdot||_2$ for the usual $L_2$ norm. 

Let us first consider a bilinear function of components $u_j$ and $u_k$ as
\[
b^{jk}(\u) = (u_j-\bar U_j)(u_k-\bar U_k)\,,
\]
where $\bar U_j$ and $\bar U_k$ are $j$th and $k$th marginal averages of $\U$ in
(\ref{e:u}). Note that
$\bar U_j$ and $\bar U_k$ are considered constants given $\U$. 
We will see that studying this function is revealing when applying Taylor expansions in
the proof of our main results.
Let $f_w$ be a sequence of design densities that converges to uniform. This can be
constructed, for example, as in (\ref{e:ftilde}) by defining $f_w(\u) =
\frac{f(\x_0+w\u)}{2^df(\x_0)}$ for a density $f$ satisfying (A.1). Let $\hat
b^{jk}_{add,w}$ be the corresponding additive estimator.
If $u_j$ and $u_k$ are uniformly  distributed, as $n \rt\infty$, $b^{jk}(\0) \rt 0$ and
$\hat b^{jk}_{add,0}\rt 0$. Thus,
$\hat b^{jk}_{add,w}(\0)$ should converge to zero too. 
Surprisingly enough, the case of vanishing second partial derivatives needs special attention.
Denote
\begin{equation}\label{A}
A_{j,k} =\left\{ \x \in [-1,1]^d \mid r_{j,k}''(\x) =0 \right\} \,.
\end{equation}
Without higher order smoothness assumption, the results below are only valid for $\x_0$ outside the borders $\partial A_{j,k}$ of $A_{j,k}$.
We claim however that these borders are small and can be ignored for most practical
situations, as explained in the remarks following Proposition~\ref{prop:ladd_unif_b2} in
Section~\ref{subsec:sbe}.

In addition to $(A.1)$, the following assumptions are made. 

\begin{itemize}
\item[$(A.2)$] The kernel $K$ is bounded, has compact support, is symmetric around 0 and is Lipschitz continuous.

\item[$(A.3)$] The density $f$ of $\x$ is bounded away from zero and infinity on $[-1,1]^d$. 

\item[$(A.4)$] For some $\theta > 5/2, E[|Y|^\theta] < \infty$.

\item[$(A.5)$] $\tilde h_j \rt 0$ such that
$\tilde n\tilde h_j^d/\ln \tilde n \rt \infty$ as $\tilde n\rt\infty$.
\end{itemize}

\subsection{Main result}

\begin{thm} \label{thm:main}
Assume that $\hat r_{add}$ is linear in $Y$ and oracle optimal.
Let $f_w$ be a sequence of design densities that converges to uniform $f_0$ and
$\hat r_{add,w}$ be the corresponding additive estimator.
Assume that $\hat r_{add,w}$ converges as $f_w$ converges and satisfies
\[
|\hat b^{jk}_{add,w}(\0) - \hat b^{jk}_{add,0}(\0)| \leq L||f_w-f_0||_2^2\,\mbox{ for all } j\neq k\,,
\]
where $L$ is a constant.
Then, for all $\x_0 \not\in \bigcup_{j,k} \partial A_{j,k}$ defined in (\ref{A}),
\begin{eqnarray*}
B^2[\hat r_{ladd}(\x_0)] &=& max\{O(h^4), O(w^8 + w^4\max_{j,k}|\hat b^{jk}_{add,0}(\0)|^2)\}\\
V[\hat r_{ladd}(\x_0)] &=& O((nw^{d-1}h)^{-1}) \,.
\end{eqnarray*}
\end{thm}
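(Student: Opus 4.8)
The plan is to exploit the two defining features of $\hat r_{add}$ granted by the hypotheses — linearity in $Y$ and exact reproduction of additive functions up to the one–dimensional smoothing bias (oracle optimality) — and to carry out everything in the rescaled problem on $[-1,1]^d$, where $\hat r_{ladd}(\x_0)=\widehat{\tilde r}_{add}(\0)$. Writing the estimator in linear form $\widehat{\tilde r}_{add}(\0)=\sum_i S_{ni}(\0)\,Y_i$, with weights $S_{ni}$ depending on $\U_i=(\X_i-\x_0)/w$ and on $\tilde h=h/w$ but not on $Y$, the conditional bias is $\sum_i S_{ni}(\0)[\tilde r(\U_i)-r(\x_0)]$ once we use that the weights reproduce constants. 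Taylor expanding $\tilde r(\U_i)=r(\x_0+w\U_i)$ under $(A.1)$ to second order splits the target into a constant, the linear forms $w\,r_j'(\x_0)u_j$, the diagonal quadratics $\tfrac{w^2}{2}r_{j,j}''(\x_0)u_j^2$, the off–diagonal cross terms $w^2 r_{j,k}''(\x_0)u_ju_k$ for $j<k$, and a remainder.

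First I would dispatch the additive pieces by linearity. The constant is reproduced exactly; each linear form has vanishing second derivative, so its one–dimensional smoothing bias is zero and its value at $\0$ is zero; each diagonal quadratic is a genuine one–dimensional component, whose additive estimate at $u_j=0$ carries the standard one–dimensional bias of exact order $\tilde h^2 w^2 r_{j,j}''(\x_0)=h^2 r_{j,j}''(\x_0)$, since $\tilde h=h/w$. These contribute $O(h^2)$ to the bias, hence $O(h^4)$ to $B^2$. Oracle optimality is what licenses this step: on additive targets $\hat r_{add}$ behaves to leading order like the one–dimensional oracle smoother, so no extra backfitting bias intrudes.

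The heart of the argument is the non-additive cross terms, where I would use the identity $u_ju_k=b^{jk}(\u)+\bar U_j u_k+\bar U_k u_j-\bar U_j\bar U_k$ to peel off the genuinely non-additive piece $b^{jk}$ from additive (linear and constant) remainders. Applying the linear, additive-reproducing estimator at $\0$ collapses the linear pieces to zero and leaves $\widehat{(u_ju_k)}_{add,w}(\0)=\hat b^{jk}_{add,w}(\0)-\bar U_j\bar U_k$. A short calculation with (\ref{e:ftilde}) gives $\bar U_j=O(w)$, so $\bar U_j\bar U_k=O(w^2)$; and the hypothesised Lipschitz bound together with $\|f_w-f_0\|_2^2=O(w^2)$ (read off from the $O(w)$ pointwise gap $f_w-f_0$) lets me replace $\hat b^{jk}_{add,w}(\0)$ by $\hat b^{jk}_{add,0}(\0)$ at cost $O(w^2)$. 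Multiplying by the prefactor $w^2 r_{j,k}''(\x_0)$ and summing yields a non-additive bias $w^2\sum_{j<k}r_{j,k}''(\x_0)\hat b^{jk}_{add,0}(\0)+O(w^4)$, of magnitude $O(w^2\max_{j,k}|\hat b^{jk}_{add,0}(\0)|+w^4)$. Squaring and absorbing the cross term by $2ab\le a^2+b^2$ gives $O(w^4\max_{j,k}|\hat b^{jk}_{add,0}(\0)|^2+w^8)$, and taking the maximum with the diagonal contribution produces the stated $B^2$.

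The hard part will be controlling the Taylor remainder uniformly. Under $(A.1)$ alone it is only $o(w^2)$, so its non-additive share could a priori swamp the clean $O(w^8)$ unless it is shown to vanish or to enter at order $O(w^4)$; this is exactly why the statement excludes $\x_0\in\bigcup_{j,k}\partial A_{j,k}$. Away from these borders either $r_{j,k}''(\x_0)\neq 0$, in which case the displayed second-order term dominates the remainder, or $r_{j,k}''\equiv 0$ on a neighborhood and the cross term is simply absent; on $\partial A_{j,k}$ the coefficient vanishes without vanishing nearby, and pinning down the remainder there would require more than $(A.1)$. For the variance I would again use linearity, $V=\sum_i S_{ni}(\0)^2\sigma^2(\X_i)$, and invoke oracle optimality to conclude that the rescaled estimate attains the one–dimensional rate $O((\tilde n\tilde h)^{-1})$; substituting $\tilde n\sim n(2w)^d f(\x_0)$ and $\tilde h=h/w$ turns $\tilde n\tilde h\sim n w^{d-1}h$ into the claimed $V=O((nw^{d-1}h)^{-1})$.
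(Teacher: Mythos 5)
Your proposal is correct and follows essentially the same route as the paper's proof: linearity plus a second-order Taylor expansion of $\tilde r(\U_i)$, oracle optimality to reduce the additive part to the $O(h^2)$ one-dimensional bias, the decomposition of $u_ju_k$ into $b^{jk}(\u)$ plus linear/constant pieces, the assumed bound $|\hat b^{jk}_{add,w}(\0)-\hat b^{jk}_{add,0}(\0)|\le L\|f_w-f_0\|_2^2=O(w^2)$, and the exclusion of $\partial A_{j,k}$ to control the Taylor remainder. If anything you are slightly more careful than the paper (correct sign on the $\bar U_j\bar U_k$ term and an explicit $\bar U_j=O(w)$ estimate, where the paper simply absorbs the linear/constant pieces as ``no additional bias''), so no gap to report.
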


\begin{proof}
Here, we will present the main ideas for bias.
Because the estimator is linear, we have
\[
E[\hat r_{add}(\x_0)] = \frac{1}{n}\sum_{i=1}^nW_i(\x_0,\X_i)r(\X_i) \,.
\]
Similarly, for the local additive estimator, we have
\[
E[\hat r_{add,w}(\x_0)] = \frac{1}{\tilde n}\sum_{i=1}^{\tilde n} \tilde W_i(\0,
\U_i)\tilde r(\U_i)\,,
\]
where $\U$ is given in (\ref{e:u}).
\begin{eqnarray*}
\tilde r(\U_i) &=& r(\x_0 +w\U_i) = r(\x_0) +w\sum_j r_j'(\x_0)U_{ij} + \frac{w^2}{2}\sum_{j,k} r_{j,k}''(\x_0)U_{ij}U_{ik} + R(\x_0, \U_i)\\
&=& additive + \frac{w^2}{2}\sum_{j\neq k}r_{j,k}''(\x_0)U_{ij}U_{ik} + R(\x_0, \U_i) \,.
\end{eqnarray*}
Thus,
\begin{eqnarray*}
B[\hat r_{ladd}(\x_0)]&=& \frac{1}{\tilde n}\sum_{i=1}^{\tilde n}\tilde W_i(\0,\U_i)\tilde
r(\U_i) - r(\x_0) \\
&=& B[additive] + \frac{w^2}{2}\sum_{j\neq k} r_{j,k}''(\x_0)\Big(\frac{1}{\tilde
n}\sum_{i=1}^{\tilde n}\tilde W_i(\0,\U_i)U_{ij}U_{ik}\Big) \\
& & + \frac{1}{\tilde n}\sum_{i=1}^{\tilde n}\tilde W_i(\0,\U_i)R(\x_0,\U_i)\,.
\end{eqnarray*}
Because of oracle optimality of the estimator, the bias of the additive part becomes 
\begin{eqnarray}
B[additive] &=& \frac{\tilde h^2}{2}\Big(\frac{w^2}{2}\sum_{j}2r_{j,j}''(\x_0)\Big) + o(\tilde h^2w^2) \nonumber \\
&=& \frac{h^2}{2}\sum_j r_{j,j}''(\x_0) + o(h^2) \label{e:a1}\,,
\end{eqnarray}
the latter equality following from (\ref{e:th}).
For the leading nonadditive term, first consider
\[
\frac{1}{\tilde n}\sum_{i=1}^{\tilde n}\tilde W_i(\0,\U_i)U_{ij}U_{ik} \,.
\]
Observe that
\begin{equation} \label{e:bjk}
U_{ij}U_{ik} = (U_{ij}-\bar U_j)(U_{ik}-\bar U_k) + \bar U_jU_{ik}+\bar U_kU_{ij}+\bar U_j\bar U_k \,.
\end{equation}
Given $\U_i$, the last three terms are linear and thus do not add additional bias. Therefore, we focus on 
\[
\frac{1}{\tilde n}\sum_{i=1}^{\tilde n}\tilde W_i(\0,\U_i)(U_{ij}-\bar U_j)(U_{ik}-\bar
U_k) = \hat b^{jk}_{add,w}(\0) \,.
\]
This is nothing but the additive estimator at $\0$ when the design density is $f_w$ and the true regression function is the bilinear function $b^{jk}$.
It may be written as
\[
\hat b^{jk}_{add,w}(\0) = \hat b^{jk}_{add,w}(\0) -\hat b^{jk}_{add,0}(\0) + \hat b^{jk}_{add,0}(\0) \,.
\]
Thus,
\begin{eqnarray}
|\hat b^{jk}_{add,w}(\0)| &\leq& L||f_w-f_0||_2^2 + |\hat b^{jk}_{add,0}(\0)| \nonumber\\
&=& O(w^2 + |\hat b^{jk}_{add,0}(\0)|) \,. \label{e:a2}
\end{eqnarray}
Therefore, the second term is of order $O(w^2)O(w^2 + |\hat b^{jk}_{add,0}(\0)|)$.
The last remainder term may be written as
\[
w^2\sum_{j,k}\Big(\frac{1}{\tilde n}\sum_{i=1}^{\tilde n}\tilde
W_i(\0,\U_i)U_{ij}U_{ik}\int_0^1(1-\theta)\{r_{j,k}''(\x_0+\theta
w\U_i)-r_{j,k}''(\x_0)\}\,d\theta \Big)\,.
\]
As $\r''$ is continuous, the integrands are $o(1)$ and the corresponding terms become negligible compared to the main term above, if $r_{j,k}''(\x_0) \not= 0$.
If $r_{j,k}''(\x)=0$ in a neighborhood of $\x_0$, the corresponding integrand vanishes.
Hence, the result follows from (\ref{e:a1}) and (\ref{e:a2}).
\end{proof}

To demonstrate the idea of our result, we make a rough comparison to the existing results
in the following two sections by differentiating a situation
with additive regression
function from that with general regression function.

\subsection{Behavior for additive regression function}

When the true regression function is additive, the additive estimator $\hat r_{add}$ has
MSE
of $O(n^{-4/5})$ and the local linear estimator $\hat r_{ll}$ has MSE of
$O(n^{-4/(4+d)})$. 
We can see this from
\begin{eqnarray*}
V\big[\hat r_{ll}(\x_0)\big] = O\big((nh^d)^{-1}\big) \,, & &B^2\big[\hat
r_{ll}(\x_0)\big] = O\big(h^4||\r''||^2\big) = O\big(h^4\big) \,,\\
V\big[\hat r_{add}(\x_0)\big] = O\big((nh)^{-1}\big) \,,& & B^2\big[\hat
r_{add}(\x_0)\big] = O\big(h^4(||\r''||)^2\big) = O(h^4) \,.
\end{eqnarray*}
The local additive estimator $\hat r_{ladd}$ should beat the local linear estimator
and come as close to the additive one as possible.
With the same principle, the local additive estimator would have
\begin{eqnarray*}
V\big[\hat r_{ladd}(\x_0)\big] &=& O\big((\tilde n\tilde h)^{-1}\big) =
O\big((nw^{d-1}h)^{-1}\big) \,,\\
B^2\big[\hat r_{ladd}(\x_0)\big] &=& O\big(\tilde h^4(||\tilde\r''||)^2\big) = O(h^4) \,.
\end{eqnarray*}
Obviously, the additive estimator is optimal, the local linear estimator is worst, and the
local additive estimator is in between.

\subsection{Behavior for general regression function}

Now consider the general case. Note that properties of additive estimators for general
regression functions are not well studied. 
Nevertheless, when the true regression function is not additive, bias of the additive
estimator is $O(1)$. 
Variance does not depend on the regression function and thus remains the same.
Thus we have
\begin{eqnarray*}
V\big[\hat r_{ll}(\x_0)\big] = O\big((nh^d)^{-1}\big) \,, & &B^2\big[\hat
r_{ll}(\x_0)\big] = O\big(h^4||\r''||)^2\big) = O\big(h^4\big) \,,\\
V\big[\hat r_{add}(\x_0)\big] = O\big((nh)^{-1}\big) \,,& & B^2\big[\hat
r_{add}(\x_0)\big] = O\big(||\r''||^2\big) = O(1) \,.
\end{eqnarray*}

Applying the same principle to the local additive estimator would lead to
\begin{eqnarray*}
V\big[\hat r_{ladd}(\x_0)\big] &=& O\big((\tilde n\tilde h)^{-1}\big) =
O\big((nw^{d-1}h)^{-1}\big) \,,\\
B^2\big[\hat r_{ladd}(\x_0)\big] &=& O\big(||\tilde \r''||^2\big) = O(w^4) \,.
\end{eqnarray*}
We will show (Theorem 2) that the limit for the bias of $\hat r_{ladd}(\x_0)$ can be
further improved to $B^2[\hat r_{ladd}(\x_0)] = O(w^8)$ using the SBE. 

\subsection{Local additive estimator based on the SBE} \label{subsec:sbe}

When the regression function is additive, it can be shown that there is no
loss in bias with local additive estimator compared to additive estimator. For
general case, the local additive estimator based on the SBE satisfies the requirements of
Theorem 1. Note that for the SBE,
existence and convergence occur with probability tending to one (see Mammen et al.~1999),
thus our statements imply the same without explicitly mentioning it.


The results are valid under quite general distributions, see assumption (A.4). 
For simplicity of notation we will assume that the residuals $\varepsilon$ have
constant variance $\sigma^2$ whenever appropriate.

\begin{thm} \label{thm:main2}
The local additive estimator $\hat r_{ladd}$ based on the smooth backfitting estimator fulfills Theorem~\ref{thm:main} with $\hat b^{jk}_{add,w}=O(w^2)$ and
\[
V[\hat r_{ladd}(\x_0)] = 2\mu_0(K^2)\sigma^2\sum_{j=1}^d\left(nw^{d-1}h_j\right)^{-1}(1+o(1))\,.
\] 
\end{thm}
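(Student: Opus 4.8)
The plan is to show that the smooth backfitting estimator (SBE) satisfies every hypothesis of Theorem~\ref{thm:main}, to sharpen the bilinear estimate to $O(w^2)$, and then to read off the variance constant from the Mammen et al.~(1999) expansion after rescaling. The bias statement then follows by feeding $\hat b^{jk}_{add,w}=O(w^2)$ into the bound of Theorem~\ref{thm:main}, which collapses the nonadditive contribution to $O(w^8)$.

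For the bias I would first check the structural requirements of Theorem~\ref{thm:main}: the SBE is linear in $Y$ by construction, and it is oracle optimal with the usual $O(\tilde h^2)$ component bias by Mammen et al.~(1999). The substantive work is the stability bound $|\hat b^{jk}_{add,w}(\0)-\hat b^{jk}_{add,0}(\0)|\le L\|f_w-f_0\|_2^2$. I would derive this from a perturbation analysis of the backfitting normal equations: the SBE components solve a linear system whose operator depends on the design density only through its marginal and bivariate margins, and one checks that this dependence is Lipschitz in $L_2$. Writing the solution for $f_w$ as the solution for $f_0$ plus a correction and using that the bilinear ``response'' $b^{jk}$ is held fixed, the first-order term in $f_w-f_0$ cancels by the centering/projection structure, leaving a remainder governed by $\|f_w-f_0\|_2^2$; this is the step that produces $L$ and is the main obstacle (a detailed treatment is deferred to Park and Seifert 2008).

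Next I would evaluate the uniform-design anchor $\hat b^{jk}_{add,0}(\0)$ and argue it equals $0$. Under the uniform product design $f_0$ the additive $L_2$ projection of $b^{jk}(\u)=(u_j-\bar U_j)(u_k-\bar U_k)$ is identically zero, since $U_j$ and $U_k$ are independent with mean $0$; moreover the entire population SBE of $b^{jk}$ is odd under the sign flip $u_j\mapsto -u_j$, which leaves both $f_0$ and the symmetric kernel $K$ invariant. Evaluating an odd function at the fixed point $\0$ gives $\hat b^{jk}_{add,0}(\0)=0$. Since $\|f_w-f_0\|_2=O(w)$ by the Taylor expansion of $\tilde f$ in~(\ref{e:ftilde}), the stability bound then yields $\hat b^{jk}_{add,w}(\0)=O(w^2)$, so that the nonadditive term $w^4|\hat b^{jk}_{add,0}(\0)|^2$ in Theorem~\ref{thm:main} is negligible and $B^2[\hat r_{ladd}(\x_0)]=\max\{O(h^4),O(w^8)\}$, as asserted.

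Finally, for the variance I would start from $\hat r_{ladd}(\x_0)=\widehat{\tilde r}_{add}(\0)$ and invoke the Mammen et al.~(1999) variance expansion for the SBE, whose oracle property makes the $d$ component estimators asymptotically uncorrelated, so the total variance is the sum of the marginal variances $\sigma^2\mu_0(K^2)/(\tilde n\,\tilde h_j\,\tilde f_j(0))\,(1+o(1))$. The rescaled density~(\ref{e:ftilde}) has marginals $\tilde f_j(0)=1/2$ at the center, which produces the factor $2$. Substituting $\tilde h_j=h_j/w$ from~(\ref{e:th}) and $\tilde n\asymp n(2w)^d f(\x_0)$, so that $\tilde n\tilde h_j\asymp n w^{d-1}h_j$, and summing over $j$ gives the displayed formula. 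The remaining care is to confirm that the $O(w^2)$ departure of $\tilde f$ from uniform and the finite-window edge effects at the centered evaluation point $\0$ do not perturb the leading variance constant, which follows from the smoothness in (A.1) and the bandwidth condition (A.5).
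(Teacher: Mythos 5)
Your proposal is correct and follows essentially the route the paper intends: verify that the SBE is linear and oracle optimal, establish $\hat b^{jk}_{add,0}(\0)=0$ by the symmetry of the uniform limiting design (the paper's own suppressed computation does this via the equally-spaced/symmetric-design argument and a direct Taylor expansion of the rescaled density, which yields the explicit $w^2/9$ coefficient rather than just the rate), and obtain the variance constant from the componentwise oracle variance with $\tilde f_j(0)=1/2$ and $\tilde n\tilde h_j\asymp nw^{d-1}h_j$. The one step you flag as the main obstacle --- the quadratic stability bound $|\hat b^{jk}_{add,w}(\0)-\hat b^{jk}_{add,0}(\0)|\le L\|f_w-f_0\|_2^2$, where a naive perturbation argument would only give first order --- is exactly the step the paper itself does not prove here but defers to Park and Seifert (2008), so your deferral matches theirs.
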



\begin{cor} \label{cor:cor1}
For all $\x_0 \not\in \bigcup_{j,k} \partial A_{j,k}$ defined in (\ref{A}), the local additive estimator $\hat r_{ladd}$ based on the smooth backfitting estimator has $B^2[\hat r_{ladd}(\x_0)] = \max\{O(h^4), O(w^8)\}$. 
\end{cor}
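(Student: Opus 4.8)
The plan is to obtain the corollary as a direct specialization of Theorem~\ref{thm:main}, using the sharper bound on the bilinear fit that the SBE supplies through Theorem~\ref{thm:main2}. Recall from the proof of Theorem~\ref{thm:main} that the only nonadditive contribution to the bias is
\[
\frac{w^2}{2}\sum_{j\neq k} r_{j,k}''(\x_0)\,\hat b^{jk}_{add,w}(\0)\,,
\]
plus a remainder. In the general theorem this term was bounded only through the Lipschitz inequality (\ref{e:a2}), which gives $\hat b^{jk}_{add,w}(\0)=O(w^2+|\hat b^{jk}_{add,0}(\0)|)$ and hence the extra summand $w^4|\hat b^{jk}_{add,0}(\0)|^2$ in the squared bias. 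The whole point of the corollary is that for the SBE this uniform-design term is absent.

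First I would invoke Theorem~\ref{thm:main2}, which states that the SBE satisfies $\hat b^{jk}_{add,w}(\0)=O(w^2)$ for every pair $j\neq k$. Since $\hat b^{jk}_{add,0}(\0)$ is the $w\to0$ limit of $\hat b^{jk}_{add,w}(\0)$ and does not itself depend on $w$, this bound forces $\hat b^{jk}_{add,0}(\0)=0$. This is precisely the heuristic already recorded before the assumptions: under the uniform product design the best additive approximation to the centered bilinear function reduces to that of $u_ju_k$, all of whose additive components vanish because $E[u_ju_k\mid u_l]=0$ for every $l$ by independence and symmetry, so the backfitting fit evaluated at $\0$ tends to zero.

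With this in hand the computation closes in one line. Substituting $\hat b^{jk}_{add,w}(\0)=O(w^2)$ into the displayed term makes it $O(w^2)\cdot O(w^2)=O(w^4)$; equivalently, setting $\hat b^{jk}_{add,0}(\0)=0$ in the bias bound of Theorem~\ref{thm:main} deletes the $w^4|\hat b^{jk}_{add,0}(\0)|^2$ summand. Together with the additive bias of order $h^2$ from (\ref{e:a1}) and the negligible remainder, the total bias is $\max\{O(h^2),O(w^4)\}$, so that $B^2[\hat r_{ladd}(\x_0)]=\max\{O(h^4),O(w^8)\}$.

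The restriction $\x_0\not\in\bigcup_{j,k}\partial A_{j,k}$ carries over verbatim from Theorem~\ref{thm:main} and is the only delicate point, since it is exactly on these borders that the remainder $\int_0^1(1-\theta)\{r_{j,k}''(\x_0+\theta w\U_i)-r_{j,k}''(\x_0)\}\,d\theta$ fails to be controlled. Away from them each pair falls into one of the two clean cases handled in Theorem~\ref{thm:main} --- either $r_{j,k}''(\x_0)\neq0$, where the integrand is $o(1)$ relative to the leading $O(w^4)$ term, or $r_{j,k}''\equiv0$ near $\x_0$, where it is identically zero. The genuine effort therefore sits entirely in Theorem~\ref{thm:main2}; granting that bound for the SBE, the corollary is immediate, and I expect no additional obstacle beyond the border-set bookkeeping.
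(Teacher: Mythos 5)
Your proposal is correct and follows essentially the same route as the paper: Corollary~\ref{cor:cor1} is obtained by substituting the SBE-specific bound $\hat b^{jk}_{add,w}(\0)=O(w^2)$ from Theorem~\ref{thm:main2} into the bias decomposition of Theorem~\ref{thm:main}, which kills the $w^4|\hat b^{jk}_{add,0}(\0)|^2$ term (equivalently, forces $\hat b^{jk}_{add,0}(\0)=0$) and leaves $B[\hat r_{ladd}(\x_0)]=O(h^2+w^4)$. Your handling of the border set $\partial A_{j,k}$ and of the Taylor remainder also matches the paper's argument, so no gap remains beyond what the paper itself defers to Theorem~\ref{thm:main2} and the technical report.
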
 
In brief, the projection property of the SBE together with (A.1) helps reducing the bias
for the general regression function.
In summary we have for general regression function
\begin{equation} \label{e:mse}
MSE[\hat r_{ladd}(\x_0)] = O(h^4 + w^8 + (nw^{d-1}h)^{-1})\,.
\end{equation}
 
\begin{cor} \label{cor:cor2}
Assume that $d\leq 8$. Optimal orders of $w$ and $h$ of the local
additive estimator
$\hat r_{ladd}$ based on the smooth backfitting estimator are given by
\[
w \sim n^{-1/(9+d)} \,,\quad h \sim n^{-2/(9+d)} \,,
\]
leading to
\[
MSE[\hat r_{ladd}(\x_0)] \sim n^{-8/(9+d)} = n^{-4/\big(4+\frac{d+1}{2}\big)} \,.
\]
\end{cor}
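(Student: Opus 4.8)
The plan is to read off the optimal bandwidths by minimizing the order expression for the mean squared error. By Corollary~\ref{cor:cor1} and (\ref{e:mse}), for $\x_0$ outside the exceptional borders,
\[
MSE[\hat r_{ladd}(\x_0)] = O\!\left(h^4 + w^8 + (nw^{d-1}h)^{-1}\right),
\]
so I would treat this as a function of the two smoothing parameters and balance its three contributions. Writing $h \sim n^{-a}$ and $w \sim n^{-b}$, the squared-bias pieces are of orders $n^{-4a}$ and $n^{-8b}$, while the variance is of order $n^{-1+b(d-1)+a}$. At an interior optimum all three must be of the same order, which yields the two equations $4a = 8b$ and $4a = 1 - b(d-1) - a$.

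Solving is then routine. The first equation gives $a = 2b$ (equivalently $h \sim w^2$), and substituting into the second gives $b(9+d)=1$, hence $b = 1/(9+d)$ and $a = 2/(9+d)$; this is exactly $w \sim n^{-1/(9+d)}$ and $h \sim n^{-2/(9+d)}$. Plugging back, each of $h^4$, $w^8$ and the variance is of order $n^{-8/(9+d)}$, so $MSE \sim n^{-8/(9+d)}$, and the identity $(9+d)/2 = 4+\frac{d+1}{2}$ recovers the stated second form $n^{-4/(4+\frac{d+1}{2})}$. To confirm that the balancing is a genuine minimum rather than an accidental critical point, I would check the first-order conditions $\partial M/\partial h = \partial M/\partial w = 0$ for $M(h,w)=h^4+w^8+(nw^{d-1}h)^{-1}$, which reproduce the same rate up to constants.

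The step I expect to carry the real content is verifying that these rates are admissible under the standing assumptions of Section~\ref{sec:main}, and this is where the restriction $d\le 8$ enters. The conditions $w \rt 0$ and $\tilde h = h/w \rt 0$ hold automatically, since $w \sim n^{-1/(9+d)} \rt 0$ and $h/w \sim n^{-1/(9+d)} \rt 0$. The binding requirement is (A.5), namely $\tilde n\tilde h^d/\ln\tilde n \rt \infty$. Using (A.3) the window $[\x_0\pm\w]$ contains $\tilde n \asymp nw^d$ observations, and since $\tilde h=h/w$ one obtains the clean simplification $\tilde n\tilde h^d \asymp nw^d\cdot(h/w)^d = nh^d \sim n^{(9-d)/(9+d)}$. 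This diverges polynomially, and thus dominates $\ln\tilde n$, exactly when $9-d>0$, i.e.\ $d\le 8$; at $d=9$ the effective local sample size $\tilde n\tilde h^d$ ceases to grow and (A.5) fails. Hence $d\le 8$ is precisely the condition under which the $MSE$-optimal bandwidths remain valid, which completes the argument.
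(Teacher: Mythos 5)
Your argument is correct and follows essentially the same route as the paper's own (unprinted) proof: balance $h^4$ and $w^8$ to get $h\sim w^2$, then minimize $w^8+(nw^{d+1})^{-1}$ over $w$ to obtain $w\sim n^{-1/(9+d)}$, $h\sim n^{-2/(9+d)}$ and the rate $n^{-8/(9+d)}$. Your closing computation that the optimal rates satisfy (A.5), via $\tilde n\tilde h^d\asymp nh^d\sim n^{(9-d)/(9+d)}$, exactly when $d\le 8$ is a correct account of where that restriction comes from, which the paper states but leaves unexplained in the proof.
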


\noprint{
\begin{proof} To select MSE--optimal smoothing parameters, observe that the minimum occurs when $h^4$ and $w^8$ have the same order.
Hence, the optimal relation between $h$ and $w$ is $h \sim w^2$ and thus we have
\[
MSE[\hat r_{ladd}(\x_0)] = O\big(w^8 + (nw^{d+1})^{-1}\big)\,.
\]
Minimization of MSE with respect to $w$ leads to the result.
\end{proof}
}

In comparison, the optimal local linear estimator achieves $O(n^{-4/(4+d)})$. 
The reduction of dimensionality is explained by the factor $\tilde
d=\frac{d+1}{2}$, the {\it equivalent dimension}. For example when $d=3$ the local
additive estimator behaves similar to a local linear estimator with $\tilde d=2$, and
when $d=5$ it will be reduced to $\tilde d=3$. Thus, local additive estimation provides
some relaxation of
dimensionality in nonparametric regression compared to the minimax local linear
estimator.

It turns out that the existence of second derivatives is not sufficient to derive explicit
coefficients for leading terms.
Below we deal with the special situation of a uniform design with higher order
smoothness assumption.

\begin{itemize}
\item[$(A.1^\prime)$] The regression function $r$ is four times continuously differentiable and $f$ is uniform.
\end{itemize}

\begin{prop} \label{prop:ladd_unif_b2}
Suppose that ($A.1^\prime$) holds. Bias of the local additive estimator $\hat r_{ladd}$ based on the smooth backfitting estimator is given by
\[
B[\hat r_{ladd}(\x_0)]=\Big(\frac{\mu_2(K)}{2}\sum_{j=1}^d h_j^2r_{j,j}''(\x_0)-\frac{w^4}{4!\cdot 9}\sum_{j\neq k} r_{j,j,k,k}''''(\x_0)\Big)+o\left(h^2+w^4\right)\,.
\]
\end{prop}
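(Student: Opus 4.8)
The plan is to reuse the linearity of the SBE exactly as in the proof of Theorem~\ref{thm:main}, but now to push the Taylor expansion of $\tilde r(\u)=r(\x_0+w\u)$ from (\ref{e:tr}) to fourth order, which is legitimate under $(A.1')$. Writing $E[\hat r_{ladd}(\x_0)]$ as the deterministic SBE applied to $\tilde r$ and evaluated at $\0$, I would expand
\[
\tilde r(\u)=r(\x_0)+w\sum_j r_j'(\x_0)u_j+\frac{w^2}{2}\sum_{j,k}r_{j,k}''(\x_0)u_ju_k+\frac{w^4}{4!}\sum_{a,b,c,d}r_{a,b,c,d}''''(\x_0)u_au_bu_cu_d+o(w^4),
\]
with the odd-order terms collected separately, and then exploit linearity to evaluate the SBE monomial by monomial at $\0$. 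Because the design is uniform ($f=f_0$) and $\tilde h_j=h_j/w\to0$ by (\ref{e:th}) and $(A.5)$, the expected SBE of each monomial converges to its $L_2$ (ANOVA) projection onto the additive subspace, up to smoothing remainders, so the whole computation reduces to identifying which monomials have a nonzero projection value at the origin.

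Next I would classify the monomials. The constant reproduces $r(\x_0)$ and cancels in the bias. Any monomial containing a variable to an odd power integrates to zero against the symmetric uniform density and the symmetric kernel $K$, hence contributes nothing at $\0$. The diagonal second-order monomials $u_j^2$ are additive, and their only bias is the interior smoothing bias $\tfrac12\mu_2(K)\tilde h_j^2\cdot(u_j^2)''=\mu_2(K)\tilde h_j^2$ at $\0$; after multiplication by the Taylor weight $\tfrac{w^2}{2}r_{j,j}''(\x_0)$ and the substitution $\tilde h_j=h_j/w$ this yields exactly $\tfrac12\mu_2(K)\sum_j h_j^2 r_{j,j}''(\x_0)$, the first displayed term. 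The bilinear monomials $u_ju_k$, $j\neq k$, have zero additive projection (this is the vanishing of $\hat b^{jk}_{add,0}(\0)$ already used in Theorem~\ref{thm:main}), and the additive quartic monomials $u_j^4$ contribute only at order $O(h^4)=o(h^2)$; all of these fall into the error term.

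The single surviving nonadditive contribution comes from the interaction monomials $u_j^2u_k^2$, $j\neq k$. Here I would compute the ANOVA projection explicitly on $[-1,1]^d$: using $E[U^2]=\tfrac13$, the additive projection of $u_j^2u_k^2$ is $\tfrac13 u_j^2+\tfrac13 u_k^2-\tfrac19$, whose value at $\0$ is $-\tfrac19$ while the monomial itself vanishes at $\0$, leaving a genuine bias governed by the constant $\tfrac19$. Pairing this projection constant with the Taylor coefficient of $u_j^2u_k^2$ coming from $\tfrac{w^4}{4!}(\sum_j u_j\partial_j)^4r$ and summing over $j\neq k$ produces the second displayed term, of order $w^4$ and carrying the factor $\tfrac{1}{4!\cdot 9}$. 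I would then check that the fourth-order Taylor remainder, passed through the bounded linear SBE weights, is $o(w^4)$, and that the $O(\tilde h^2)$ smoothing corrections to each projection are $o(h^2+w^4)$ under $h\sim w^2$.

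I expect the main obstacle to be the justification that the expected SBE of each monomial equals its $L_2$ projection value up to remainders that are uniformly $o(h^2+w^4)$: this needs control of the finite-bandwidth smooth backfitting operator, namely its convergence to the additive projection together with the order of the correction, drawing on the operator theory of Mammen et al.~(1999). The secondary difficulty is purely bookkeeping, but delicate: the exact coefficient in the interaction term requires tracking the interplay of the $\tfrac{1}{4!}$ Taylor factor, the multinomial multiplicity of $u_j^2u_k^2$, and the $\tfrac19$ projection constant. Note that $(A.1')$ (four derivatives together with uniform $f$) is precisely what removes the exclusion of $\bigcup_{j,k}\partial A_{j,k}$ present in Theorem~\ref{thm:main} and makes the remainder control uniform.
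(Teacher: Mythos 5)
Your plan coincides with the paper's own proof (deferred to Park and Seifert 2008 but sketched in the source): a fourth‑order Taylor expansion under $(A.1')$, symmetry of the uniform design and of $K$ killing all odd monomials, the standard $\tfrac{1}{2}\mu_2(K)h_j^2r_{j,j}''(\x_0)$ smoothing bias from the $u_j^2$ terms, vanishing contribution of $u_ju_k$, and a residual $-\tfrac19$ from $u_j^2u_k^2$ at the origin — which the paper obtains by computing $E[U_j^2U_k^2]=\tfrac19$ and its kernel‑weighted analogues directly through the explicit smooth backfitting weights, i.e.\ exactly the content of your ANOVA‑projection identity $\tfrac13u_j^2+\tfrac13u_k^2-\tfrac19$. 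The step you flag as the main obstacle (finite‑bandwidth SBE expectation equals the $L_2$ projection plus an explicit $O(\tilde h^2)$ correction) is precisely what that direct moment computation supplies, the resulting $O(w^2h^2)$ cross term being $o(h^2)$ as required.
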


Contrary to Theorems \ref{thm:main} and \ref{thm:main2}, the Proposition is valid
without any exclusion of boundaries $\partial A_{j,k}$, which implies that the restriction
there is
related to irregular points of the regression function only. It should be mentioned
however that irrespective of condition $(A.1^\prime)$ the MSE is always of
order
$O\left(h^4+w^6+(nw^{d-1}h)^{-1}\right)$ if $r''$ is Lipschitz continuous. Thus,
the local additive estimator {\it works} also at the remaining boundaries.
Proposition~\ref{prop:ladd_unif_b2} additionally shows that higher order smoothness
assumption would not help further reduce bias.
Moreover, it can be deduced from the proof (not shown) that the existence of $\r''$
is not sufficient to derive leading terms. 

The optimal smoothing parameters are determined in the following.
Define
\[
a = \frac{\mu_2(K)}{2}\sum_{j}r_{j,j}''(\x_0) \,,\quad
b = \frac{1}{4!\cdot 9}\sum_{j\neq k} r_{j,j,k,k}''''(\x_0)\,,\quad c = 2d\mu_0(K^2)\sigma^2 \,.
\]
\begin{prop} \label{prop:ladd_unif_w}
Suppose that ($A.1^\prime$) holds. Assume that $h_j=h$ and let $h=C_hw^2$. The
smoothing parameter $w$ that minimizes asymptotic MSE is given by
\[
w = \left(\frac{c(d+1)}{8C_h(aC_h^2-b)^2}\right)^{1/(9+d)}n^{-1/(9+d)} \,.
\]
\end{prop}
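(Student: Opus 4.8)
The plan is to assemble the asymptotic MSE as squared bias plus variance, collapse it into a one--variable function of $w$ via the relation $h=C_hw^2$, and then carry out a routine first--order minimization. First I would start from the bias expansion in Proposition~\ref{prop:ladd_unif_b2}. Setting $h_j=h$ and recalling the definitions of $a$ and $b$, the leading bias is $ah^2-bw^4+o(h^2+w^4)$. Substituting $h=C_hw^2$ forces both leading terms to the common order $w^4$, so that $B[\hat r_{ladd}(\x_0)]=(aC_h^2-b)w^4+o(w^4)$ and hence $B^2[\hat r_{ladd}(\x_0)]=(aC_h^2-b)^2w^8+o(w^8)$.

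Next I would handle the variance from Theorem~\ref{thm:main2}. With $h_j=h=C_hw^2$ the sum over $j$ contributes a factor $d$, and writing $c=2d\mu_0(K^2)\sigma^2$ gives $V[\hat r_{ladd}(\x_0)]=\frac{c}{C_h}(nw^{d+1})^{-1}(1+o(1))$. Adding the two contributions expresses the asymptotic MSE in the compact form $\alpha w^8+\beta w^{-(d+1)}$ with $\alpha=(aC_h^2-b)^2$ and $\beta=c/(C_hn)$. I would then minimize over $w$: differentiating and setting the derivative to zero yields $8\alpha w^7=(d+1)\beta w^{-(d+2)}$, hence $w^{9+d}=(d+1)\beta/(8\alpha)$. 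Substituting back for $\alpha$ and $\beta$ and isolating the $n$--dependence reproduces exactly $w=\big(c(d+1)/(8C_h(aC_h^2-b)^2)\big)^{1/(9+d)}n^{-1/(9+d)}$, as claimed.

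The computation is essentially elementary calculus; the one genuinely delicate point is that the two leading bias terms share the order $w^4$, so the stated formula presupposes $aC_h^2\neq b$. If instead $aC_h^2=b$, the $w^8$ bias contribution cancels and one would have to retain the $o(w^4)$ remainder, which would alter the bias--variance balance entirely. I would therefore flag this non--degeneracy condition explicitly. I would also remark that, to legitimize replacing the full MSE by its leading part in the minimization, the $o(\cdot)$ remainders must be uniform enough that the minimizer of the leading terms is asymptotically the minimizer of the whole; this follows because under $(A.1^\prime)$ both the bias expansion of Proposition~\ref{prop:ladd_unif_b2} and the variance expansion of Theorem~\ref{thm:main2} hold with honest remainder orders.
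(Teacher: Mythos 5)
Your proposal is correct and follows the same route the paper intends: combine the bias expansion of Proposition~\ref{prop:ladd_unif_b2} with the variance from Theorem~\ref{thm:main2}, substitute $h=C_hw^2$ so the MSE becomes $(aC_h^2-b)^2w^8+\frac{c}{C_h}(nw^{d+1})^{-1}$, and minimize in $w$ (the paper defers the details to Park and Seifert~(2008), but its commented-out sketch for Corollary~\ref{cor:cor2} confirms this is the intended calculation). Your remark on the non-degeneracy condition $aC_h^2\neq b$ is apt; note it is automatically satisfied at the optimal $C_h$ of Proposition~\ref{prop:ladd_unif_wconst} whenever $b\neq 0$.
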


\begin{prop} \label{prop:ladd_unif_wconst}
Under the same assumptions as in Proposition~\ref{prop:ladd_unif_w}, the optimal choice of $C_h$ is given by
\[
C_h = \sqrt{\frac{2}{d-1}\Big(-\frac{b}{a}\Big)} \,.
\]
provided that $ab<0$.
\end{prop}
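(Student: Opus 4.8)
The plan is to obtain the asymptotic MSE by combining the squared bias of Proposition~\ref{prop:ladd_unif_b2} with the variance of Theorem~\ref{thm:main2}, impose the relation $h=C_hw^2$ from Proposition~\ref{prop:ladd_unif_w}, and then optimize in two stages: first over $w$ for fixed $C_h$ (recovering Proposition~\ref{prop:ladd_unif_w}), and then over $C_h$ in the profiled (minimized-in-$w$) MSE. First I would write, with $h_j=h$ and the abbreviations $a,b,c$ introduced before the proposition,
\[
MSE[\hat r_{ladd}(\x_0)] = (h^2a - w^4b)^2 + \frac{c}{nw^{d-1}h}\,(1+o(1)) \,.
\]
Substituting $h=C_hw^2$ collapses the leading squared bias to $w^8(aC_h^2-b)^2$ and the leading variance to $\frac{c}{C_h}(nw^{d+1})^{-1}$, so that the asymptotic MSE takes the form $Aw^8 + Bw^{-(d+1)}$ with $A=(aC_h^2-b)^2$ and $B=c/(C_hn)$.

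Next I would minimize $Aw^8+Bw^{-(d+1)}$ in $w$. Setting the derivative to zero gives the stationary point $w^{9+d}=(d+1)B/(8A)$, which is exactly the $w$ of Proposition~\ref{prop:ladd_unif_w}; since $A,B>0$ and the two powers of $w$ have exponents of opposite sign, this is the unique interior minimum. Substituting back, a standard computation shows the minimized value equals a fixed constant depending only on $d$ times $A^{(d+1)/(9+d)}B^{8/(9+d)}$. Discarding all factors free of $C_h$ (including the $n$-dependence), the function to be minimized over $C_h$ is
\[
g(C_h) = (aC_h^2-b)^{2(d+1)/(9+d)}\,C_h^{-8/(9+d)} \,.
\]

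Finally I would differentiate $\ln g$. The stationarity condition $\frac{d}{dC_h}\ln g=0$ reads $\frac{4(d+1)aC_h}{aC_h^2-b}=\frac{8}{C_h}$, which after clearing denominators becomes $4(d-1)aC_h^2=-8b$, i.e. $C_h^2=\frac{2}{d-1}(-b/a)$, the claimed formula. The demand that this yield a positive real $C_h^2$ is precisely $ab<0$; under that hypothesis $aC_h^2-b$ never vanishes, so $g$ is smooth on $(0,\infty)$, while the boundary behavior $g\to\infty$ as $C_h\to 0^+$ (from the factor $C_h^{-8/(9+d)}$) and as $C_h\to\infty$ (from the positive exponent $(4d-4)/(9+d)$, using $d>1$) confirms that the stationary point is a genuine minimum.

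The step I expect to require the most care is the bookkeeping of the two-stage optimization, namely propagating the $C_h$-dependence of both $A$ and $B$ correctly through the profiled MSE, together with the sign analysis: one must verify that $ab<0$ simultaneously makes $C_h$ real and forces $aC_h^2-b\neq 0$ at the optimum, so that the leading bias term $w^4(aC_h^2-b)$ does not accidentally cancel, and that the large-$C_h$ growth exponent $(4d-4)/(9+d)$ is positive, which again needs $d>1$.
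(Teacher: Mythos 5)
Your proposal is correct and follows the same route the paper (via its technical report and the commented-out argument for Corollary~\ref{cor:cor2}) takes: form the asymptotic MSE from Proposition~\ref{prop:ladd_unif_b2} and Theorem~\ref{thm:main2}, substitute $h=C_hw^2$, profile out $w$ to recover Proposition~\ref{prop:ladd_unif_w}, and minimize the resulting $(aC_h^2-b)^{2(d+1)/(9+d)}C_h^{-8/(9+d)}$ over $C_h$. The algebra checks out, the condition $ab<0$ is correctly identified as what makes $C_h$ real and keeps $aC_h^2-b=-b(d+1)/(d-1)\neq 0$ at the optimum, and the boundary analysis confirming a genuine minimum is a welcome extra.
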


Properties of the local additive estimator based on
the SBE are studied in detail in Park and Seifert~(2008). Proofs of
Propositions~\ref{prop:ladd_unif_b2}--\ref{prop:ladd_unif_wconst} are found there
and results of Theorem~\ref{thm:main2} and Corollary~\ref{cor:cor1} can be deduced
directly from results formulated there.

\subsection{Data-adaptive parameter selection}\label{subsubsec:aic}

We consider smoothing parameter selection based on model selection criteria for general
regression function estimation.

Although asymptotic equivalence of classical model selection criteria has long been
recognized (H\"ardle et al.~1988), because of small sample behavior, several versions
of model selection criteria exist (Hurvich and Simonoff 1998). Still most discussions were
limited to one dimensional problem.

For additive models with ordinary backfitting estimator, Opsomer and Ruppert (1998)
proposed a plug-in bandwidth selector and Wood~(2000) proposed generalized
cross-validation approach for additive models with penalized regression splines. For
additive models with smooth backfitting estimator, Nielsen and Sperlich~(2005) discussed
cross validation, while Mammen and Park~(2005) proposed a bandwidth selection method
which minimizes a penalized sum of squared residuals 
\[
PLS = \hat\sigma^2\left(1+2\,\sum_{j}\frac{1}{nh_j} K(0) \right)
\]
and noted that it is computationally more feasible than cross validation. 
They also conjectured about model misspecification (p.~1263) that {\it ...the penalized
least squares bandwidth will work reliably also under misspecification of the additive
model. This conjecture is supported by the definition of
this bandwidth...} but pointed out the difficulty involved in the theory (p.~1267).

For nonadditive models Studer et al.~(2005), in the context of penalized additive
regression approach, investigated parameter selection based on AIC-type model selection
criteria such as AIC, GCV, and AIC$_C$~(Hurvich et al.~1998) and established asymptotic
equivalence of these estimators in multivariate local linear regression for $d\leq 4$
where the estimator satisfies stability condition. 
Note that the additive SBE uses only two-dimensional marginal densities and thus such
restriction is not necessary.

We investigate smoothing parameter selection based on AIC-type model selection criteria
and show that PLS is equivalent to AIC-type model selection criteria.
Because the local additive estimator based on the SBE uses two-dimensional densities in
the rescaled window, the formulas (6.18)-(6.21) in Mammen and Park~(2005) can be used to
show that (A.5) is sufficient for the local additive estimator to be stable. In view of
Corollary~\ref{cor:cor2}, (A.5) is necessary too.

Consider 
\[
AIC(h,w) = \log(\hat\sigma^2) + 2tr(H)/n \,,
\]
where $\hat\sigma^2 = \frac{1}{n}||\Y-H\Y||^2$, $\Y$ is the column vector
of
responses on design points with a hat matrix $H$ and $tr(H)$ is the trace of the hat
matrix $H$. 
Using
\[
\log(\hat\sigma^2) = \log(\sigma^2) + \frac{\hat\sigma^2}{\sigma^2} - 1 +
O_p\left((\hat\sigma^2-\sigma^2)^2\right) \,.
\]
Studer et al.~(2005) defined the Taylor approximation of AIC$-$ $\log(\sigma^2)$
by
\begin{equation}\label{e:aic}
\mbox{AIC}_T = \frac{\hat\sigma^2}{\sigma^2} - 1 + \frac{2}{n}tr(H) \,.
\end{equation}
It can be shown that AIC and AIC$_T$ are equivalent for the optimal parameters in
Corollary~\ref{cor:cor2}. Using the fact that for additive regression functions
\[
tr(H) \rt K(0)\sum_{j}1/h_j \,,
\]
(see (6.11) in Mammen and Park~2005), we establish below that PLS and AIC$_T$ are
equivalent as long as $\hat\sigma^2$ is consistent and $\hat r$ is stable.

\begin{prop}\label{prop:pls} The PLS defined by Mammen and Park~(2005) is equivalent to
AIC$_T$ defined
by Studer et al.~(2005).
\end{prop}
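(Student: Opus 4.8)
The plan is to show that, viewed as functions of the smoothing parameters $(h,w)$, the two criteria differ only by an affine transformation whose coefficients do not depend on $(h,w)$; consequently they share the same asymptotic minimizer, which is the sense in which they are equivalent. First I would write both criteria out explicitly on a common scale. From its definition $\mbox{PLS} = \hat\sigma^2 + \frac{2K(0)}{n}\hat\sigma^2\sum_{j}1/h_j$, while $\mbox{AIC}_T = \hat\sigma^2/\sigma^2 - 1 + \frac{2}{n}tr(H)$. Since the natural common scale is $\sigma^2$, I would multiply the latter through, obtaining $\sigma^2\,\mbox{AIC}_T = \hat\sigma^2 - \sigma^2 + \frac{2\sigma^2}{n}tr(H)$.

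Next I would reconcile the two penalty terms. On the AIC side I substitute the trace approximation $tr(H) \rt K(0)\sum_{j}1/h_j$ quoted above, so that the penalty becomes $\frac{2\sigma^2 K(0)}{n}\sum_{j}1/h_j\,(1+o(1))$. On the PLS side the penalty instead carries the random factor $\hat\sigma^2$; because $\hat\sigma^2$ is consistent I would replace it there by $\sigma^2$, noting that the resulting error $\frac{2K(0)}{n}(\hat\sigma^2-\sigma^2)\sum_{j}1/h_j$ is of strictly smaller order than the penalty itself, as $\hat\sigma^2-\sigma^2 = o_p(1)$. The crucial point is that $\hat\sigma^2$ must be left untouched in the leading goodness-of-fit term of each criterion, since it is exactly the dependence of $\hat\sigma^2$ on $(h,w)$ that is being traded off against the penalty.

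Combining these steps I would arrive at
\[
\mbox{PLS} = \sigma^2\,\mbox{AIC}_T + \sigma^2 + (\mbox{negligible})\,,
\]
and since $\sigma^2>0$ is a positive constant independent of $(h,w)$, the map $x\mapsto \sigma^2 x + \sigma^2$ is strictly increasing and affine. Hence $\mbox{PLS}$ and $\mbox{AIC}_T$ attain their minima at the same $(h,w)$ asymptotically, which is the asserted equivalence.

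The main obstacle is making the two approximations — replacing $\hat\sigma^2$ by $\sigma^2$ in the penalty and $tr(H)$ by $K(0)\sum_{j}1/h_j$ — legitimate uniformly over the relevant range of $(h,w)$, so that the $o_p$ and $o$ errors cannot shift the location of the minimizer to leading order. This is precisely where the stability of $\hat r$ (guaranteed by $(A.5)$, via the two-dimensional density formulas (6.18)--(6.21) of Mammen and Park~2005) and the consistency of $\hat\sigma^2$ enter: near the optimum the penalty and the goodness-of-fit term are of comparable order, so that a $(1+o_p(1))$ factor multiplying the penalty leaves the $\arg\min$ unchanged to leading order.
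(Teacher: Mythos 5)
Your proposal is correct and follows essentially the same route as the paper: both rewrite the two criteria on the $\sigma^2$ scale, observe that $\sigma^2(\mbox{AIC}_T+1)$ and PLS differ exactly by $2(\sigma^2-\hat\sigma^2)tr(H)/n + 2\hat\sigma^2\bigl(tr(H)-tr(H)_{\infty}\bigr)/n$, and show this difference is $o\left(tr(H)/n\right)$ using consistency of $\hat\sigma^2$ and stability of $\hat r$. Your added discussion of why the affine relation preserves the asymptotic minimizer just makes explicit the sense of ``equivalence'' that the paper leaves implicit.
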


A decomposition of $AIC_T$ leads to
\begin{prop}\label{prop:aic}
\begin{eqnarray*}
\lefteqn{AIC_T - \Big(\frac{1}{n\sigma^2}\verr^\prime\verr - 1 \Big)} \\
&=& \frac{1}{n\sigma^2} ||(I-H)\r||^2 + \frac{1}{n\sigma^2}E[||H\verr||^2] +
O_p(\frac{h^2+w^4}{\sqrt{n}}) + O_p\big(\frac{1}{n\sqrt{w^{d-1}h}}\big)
\end{eqnarray*}
\end{prop}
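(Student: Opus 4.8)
The plan is to treat the stated identity as an exact algebraic decomposition of $\hat\sigma^2$, followed by order bounds on the resulting stochastic remainders. Writing $\Y=\r+\verr$ and using $n\hat\sigma^2=\|(I-H)\Y\|^2$, I would first expand
\[
n\hat\sigma^2=\|(I-H)\r\|^2+2\,\r'(I-H)'(I-H)\verr+\verr'\verr-\verr'(H+H')\verr+\verr'H'H\verr\,.
\]
Dividing by $n\sigma^2$, substituting into $AIC_T=\hat\sigma^2/\sigma^2-1+(2/n)tr(H)$, and subtracting $\big(\verr'\verr/(n\sigma^2)-1\big)$ isolates five pieces: the bias term $\|(I-H)\r\|^2/(n\sigma^2)$, a bias--noise cross term, the two noise quadratic forms, and the penalty $(2/n)tr(H)$.

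The algebraic heart of the argument is to recenter the quadratic forms. Since $E[\verr'(H+H')\verr]=2\sigma^2\,tr(H)$, the penalty $(2/n)tr(H)$ cancels exactly the mean of $\verr'(H+H')\verr/(n\sigma^2)$, leaving only its centered fluctuation. Writing $\verr'H'H\verr=\|H\verr\|^2=E\|H\verr\|^2+\big(\|H\verr\|^2-E\|H\verr\|^2\big)$ retains the explicit variance term $E\|H\verr\|^2/(n\sigma^2)$ appearing in the statement and produces a second centered fluctuation. After this bookkeeping the left-hand side equals the two explicit terms plus the cross term and the two centered quadratic fluctuations, so it remains only to bound these three remainders.

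For the cross term $2\,\r'(I-H)'(I-H)\verr/(n\sigma^2)$, which has mean zero, I would compute its variance $4\|(I-H)'(I-H)\r\|^2/(n^2\sigma^2)$. Under stability of $H$ (operator norm $O(1)$, guaranteed by (A.5) via the formulas of Mammen and Park 2005) this is $O(\|(I-H)\r\|^2/n^2)$, and since each coordinate of $(I-H)\r$ is a pointwise bias of order $h^2+w^4$ by Corollary~\ref{cor:cor1}, one has $\|(I-H)\r\|^2=O\big(n(h^2+w^4)^2\big)$; Chebyshev then gives the $O_p\big((h^2+w^4)/\sqrt n\big)$ bound. For each centered quadratic form I would invoke the standard variance formula for $\verr'M\verr$, whose leading Gaussian part is $2\sigma^4\,tr(M^2)$; stability bounds $tr(M^2)=O(tr(H'H))$, and the pointwise variance $O((nw^{d-1}h)^{-1})$ from Theorem~\ref{thm:main2} gives $tr(H'H)=\sum_i(HH')_{ii}=O((w^{d-1}h)^{-1})$. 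Hence each fluctuation has standard deviation $O\big(\sigma^2(w^{d-1}h)^{-1/2}\big)$, yielding the $O_p\big((n\sqrt{w^{d-1}h})^{-1}\big)$ bound after dividing by $n\sigma^2$.

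The main obstacle is establishing the trace orders $tr(H)\sim\sum_j 1/h_j$ and $tr(H'H)=O((w^{d-1}h)^{-1})$ together with operator-norm stability of the local additive SBE hat matrix: these do not follow from the pointwise results alone and require the explicit SBE structure in the rescaled window, i.e.\ the formulas (6.18)--(6.21) of Mammen and Park~(2005). A secondary technical point is controlling the kurtosis correction $\propto\sum_i M_{ii}^2$ in the quadratic-form variance, which is negligible relative to the Gaussian part precisely because $(nw^{d-1}h)^{-1}\to 0$ under (A.5), subject to the moment assumption (A.4) on the errors.
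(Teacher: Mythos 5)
Your proposal follows essentially the same route as the paper: the identical expansion of $\|(I-H)(\r+\verr)\|^2$, the cancellation of the penalty $2\,tr(H)/n$ against $E[\verr^\prime(H+H^\prime)\verr]/(n\sigma^2)$, and control of the two centered quadratic forms via the quadratic-form variance formula together with $tr(H^\prime H)=O((w^{d-1}h)^{-1})$ and $tr((H^\prime H)^2)=O(tr(H^\prime H))$, which the paper establishes by an explicit hat-matrix structure argument rather than your operator-norm/pointwise-variance shortcut. The one notable divergence is the cross term, where your mean-zero variance-plus-Chebyshev argument delivers the claimed $O_p((h^2+w^4)/\sqrt n)$ rate directly, whereas the paper's displayed Cauchy--Schwarz step, taken literally, only yields $O_p(h^2+w^4)$.
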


The first term on the right hand side of the decomposition of AIC$_T$ is the mean
squared bias, whereas the second term is the variance of $\hat r_{ladd}$, both
divided by $\sigma^2$.
Thus, smoothing parameter selection
based on AIC-type model selection criteria leads to asymptotically optimal bias variance
compromise. 

Proofs of Propositions~\ref{prop:pls} and \ref{prop:aic} are given in Appendix.

\section{ Numerical performance } \label{sec:numeric}

\subsection{Simulation studies}

We are interested in investigating how the smoothing parameters
are related to performance of the estimators of general regression function in terms of
conditional MISE. 
For general multivariate nonparametric regression problem, there are
limited simulation studies reported in the literature. 
For example, Banks et al.~(2003) reported comparison results of a broad class of
multivariate nonparametric regression techniques. Some additive model simulation studies
can be found in Dette et al.~(2005) and Martins-Filho and Yang~(2006). Here we
focus on
comparison to local linear and additive estimators as a benchmark on either extremes.
Local linear estimator is
optimal for general regression function estimation so the comparison to it allows us to
assess the
behavior for nonadditive regression function estimation. Likewise additive estimator is
used to
study the behavior for additive regression function estimation. Results are based on
Monte-Carlo
approximation of MISE. 

\paragraph{d=2:} A random uniform design on $[-1,1]^2$ and normally distributed residuals 
${\cal N}(0,\sigma^2)$ were assumed with sample
sizes 200, 400, and 1600. Estimators are evaluated at an
equidistant output grid of $21\times 21$ points. For fitting the SBE, we used
{\it SBF2} package of {\it R} developed in conjunction with Studer et al.~(2005), which is
freely available from www.biostat.uzh.ch/research/software/. 

The main factor of consideration in our simulation studies is the
regression function, covering a range of additive and nonadditive functions. 
To illustrate the behavior of the local additive estimator, we first consider the
regression function 
\begin{equation} \label{e:r2}
r(\x) = x_1^2 + x_2^2  + \frac{\alpha}{1-\alpha} x_1x_2 \,,
\end{equation}
where $\alpha$ controls the amount of nonadditive structure in the function. 
\[
\centering\fbox{Figure~\ref{fig:d2mise0} about here.}
\]

Performance of the local additive estimator is illustrated in Figure~\ref{fig:d2mise0}.
Estimation is based on 400 observations with $\alpha=0.4$ and $\sigma=0.5$.
All estimators used their MISE-optimal smoothing parameters. 
As expected, the additive estimator (lower right panel) does not capture the nonadditive
structure. 
The local linear estimator (upper right) reveals the diagonal structure but has a quite
large bias due to its large MISE-optimal bandwidth ($h=0.64$). 
Because of local additive, instead of local linear, approximation of the regression
function, the local additive estimator uses more observations ($w=0.94$), resulting in an
improved variance, whereas the bandwidth is smaller ($h=0.47$), resulting in an improved
bias. 
As a consequence, the local additive estimator inherits the optimal properties in a {\it
local} sense. 
For smoothing parameter selection in practice, Figure~\ref{fig:d2aic} presents
comparison of ASE-optimal parameters to AIC$_C$ optimal ones for the local additive
estimator based on one realization drawn from the same design used
in Figure~\ref{fig:d2mise0} with $\sigma=0.5$. The range of smoothing parameters
suggested by both criteria largely agrees and we find AIC$_C$ comparable for practical
use.

\[
\centering\fbox{Figure~\ref{fig:d2mise2} about here.}
\]

The effect of nonadditivity $\alpha$ in the regression function on MISE can be seen in
Figure~\ref{fig:d2mise2}, on a log scale. 
MISE~(first row) is decomposed into the integrated squared bias~(second row) and variance~(third row). 
Different columns correspond to different $\sigma$s. 
In each panel, the optimal MISE is plotted as a function of $\alpha$, with an individual optimal choice of smoothing parameters found in the above simulations.
Solid line is for local additive estimator, dashed line for local linear estimator and
dotted line for additive estimator. 
As is expected, the regression function has little effect on the local linear estimator
but had a dramatic impact on the additive estimator because of growing nonadditivity.
Local additive estimator shows relatively robust performance, adapting the best of the
former estimators.

MISE behavior for other regression functions is summarized in Table \ref{tab:mise}.
Regression functions used are
additive peaks
\[
r(\x)=\frac{1}{2}\sum_{k=1}^2
\left(0.3\exp(-2(x_k+0.5)^2)+0.7\exp(-4(x_k-0.5)^2)+0.5\exp(-\frac{x_k^2}{2})\right),
\]
superposed peaks
\[
r(\x)=0.3\exp(-2\|\x+0.5\|^2)+0.7\exp(-4\|\x-0.5\|^2)+0.5\exp(-\frac{
\|\x\|^2}{2})\,,
\]
and periodic nonadditive function
\[r(\x) = \cos(\pi||\x||)\,.\]
MISE-values are multipled by 1000. MISE-optimal smoothing parameters are
also supplied, with MISE ratios.
\[
\centering\fbox{Table~\ref{tab:mise} about here.}
\]
We considered variants of these scenarios for other regression functions and design
densities such as fixed
uniform, fixed uniform jittered, linearly skewed one and linearly skewed
jittered designs and observed similar phenomena stable across designs considered.
More simulation results are found in Park and Seifert~(2008).
There, one can also find simulations for $d=3$. 
Because of dimensionality, the candidate
regions of smoothing parameters are narrower than those for $d=2$, but the behavior of the
estimators is similar and thus the same conclusions apply. 
 
{\bf d=10:} For higher--dimensional case, we considered 
the regression function 
\begin{equation} \label{e:r10}
r(\x) = x_1^2 + \alpha x_1 \big(\sum_{j=2}^{10} x_j) \,\quad \alpha = 0, 0.5, \mbox{ or } 1 \,,
\end{equation}
with 2000 observations on a random uniform design and $\sigma=0.2$. 
Local estimation in 10 dimensions calls for boundary correction. Otherwise, the expected number of observations in a corner would be $w^{10}n$ compared to $1024w^{10}n$ in the center. 
To illustrate the behavior of local additive estimator using an additive
estimator other than the SBE, we used the function {\it gam} in the {\it mgcv} package of
{\it R}.
Although optimality of the penalized splines used there is not known, the idea of
local additive estimator can be easily applied. 
Moreover, {\it gam} has computational advantages; implementation with {\it gam}
particularly facilitates selection of smoothing parameter using generalized cross
validation~(GCV).
Unconditional MASE was approximated with 20 runs of simulation. To reduce computational
burden, estimators are evaluated at 50 design points randomly
chosen at each simulation.
The resulting relative standard error of MASE estimators is about 3-5\%.
\[
\centering\fbox{Figure~\ref{fig:d10mise} about here.}
\]
Figure~\ref{fig:d10mise} shows performance of estimators for three different values of
$\alpha$. 
Dashed line is for local linear estimator, solid line for local additive estimator. The
letter ``a" at the end of solid line represents additive estimator. 
The $x$-axis represents smoothing parameter; for local linear estimator, it is
the
bandwidth $h$ and for local additive estimator, it is $w$, and the GCV-optimal value of
$h$ given $w$ was chosen internally by {\it gam}. 
Performance of local linear estimator does not depend on the regression function,
while local additive estimator adapts to additivity, exhibiting lower curves as the panel
moves to the right. 
We can conclude that overall performance of local additive estimator exceeds that of
others, adapting to
nonadditivity.

In summary, we have observed that when the
regression function is additive or close to additive the local additive estimator is
compatible to the additive estimator, and when the regression function is nonadditive it
mimics the local linear estimator
whenever possible. 
We also have noticed that the lowest possible bandwidth that local additive estimator
could exploit is limited by the number of observations required to obtain a stable
estimator for every output point.
A boundary correction sometimes helps to stabilize an estimator but it works differently
for different estimators and thus we decided not to include it except for $d=10$. 

\subsection{Real data example}

We use the ozone dataset from the {\it R} package (Section 10.3, Hastie and
Tibshirani~(1990)) to make comparison to the previous analysis. With nine predictors, an
additive regression model would be a natural choice. When a new approach which can deal
with nonadditive structure is applied, the model can be further refined or simplified.
Studer et al.~(2005) pointed out that the additive model with nine predictors is almost
equivalent, in terms of adjusted R$^2$, to an additive model with a subset of
predictors, allowing bivariate interaction terms. They applied penalized regression
approach to uncover behavior of the bivariate interaction, noting serious departure from
additive model assumption.

To make it comparable, we adopt the same framework as Studer et al.~(2005), where the
dependent variable is defined as the logarithm of the upland ozone concentration (up03)
and three predictors, humidity (hmdt), inversion base height (ibtp), and calendar
day (day) are chosen which maximize adjusted R$^2$ among fitted additive models with
bivariate interaction terms with 16 degrees of freedom each, using {\it gam} in
{\it R} package {\it mgcv}. Then the three
variables were scaled to [0,1]. As noted in the previous analysis, one observation (92)
that contains excessive value of wind speed was removed prior to the analysis. 

We consider local additive model and additive with bivariate interaction model for
comparison.
The additive with interaction model was fitted using {\it gam} with
internally chosen optimal smoothing parameters. To fit the local additive model based
on the SBE, univariate bandwidths $h_1, h_2$ and $h_3$
are initially chosen to have four degrees of freedom each as in Studer et
al.~(2005). These are shown to lie close together with mean $h
=\sqrt[3]{h_1h_2h_3}=0.237$. Bandwidths for the local additive estimator are set to be
$(ch_1,ch_2,ch_3)$. Parameters $c$ and $w$ are then selected based on
AIC$_C$.

\[
\centering\fbox{Figure~\ref{fig:ozone} about here.}
\]

These estimators are compared in Figure~\ref{fig:ozone}. For
reference, we also reproduced the local linear estimator from Studer et al.~(2005). 
The univariate components on the top show
similar trend, although the local linear estimates show occasional kinks and the
additive
with interaction models tends to smooth out quickly, especially for hmdt. The bottom
row shows the largest bivariate interaction, that between ibtp and hmdt, for each
estimator. We see that in both terms the local additive estimator provides a good
compromise.
Interested readers are referred to Section 5.3 and Figure 5 in Studer et al.~(2005)
for
further comparison and issues with regularisation. 

\begin{table}
\begin{center}
\begin{tabular}{|c|p{1.6in}|p{2.3in}|p{1.75in}|} \hline
\multicolumn{4}{|c|}{\bf Additive peaks} \\ \hline\hline
$\sigma$ & local linear ($h_{opt}$) & local additive ($h_{opt},w_{opt}$) & additive
($h_{opt}$) \\ \hline 
0.1&3.9=315\% (h=0.260)&1.2=100\% (h=0.123, w=0.870)&1.3=107\% (h=0.143) \\ \hline 
0.5&22.1=136\% (h=0.473)&16.2=100\% (h=0.350, w=0.988)&15.4=95\% (h=0.350)\\ \hline 
1.0&39.6=111\% (h=1.000)&35.6=100\% (h=0.741, w=0.933)&32.5=91\% (h=0.861)\\ \hline
\end{tabular}\\
\begin{tabular}{|c|p{1.6in}|p{2.3in}|p{1.75in}|} \hline
\multicolumn{4}{|c|}{\bf Superposed peaks} \\ \hline\hline
$\sigma$ & local linear ($h_{opt}$) & local additive ($h_{opt},w_{opt}$) & additive
($h_{opt}$) \\ \hline 
0.1&2.6=117\% (h=0.260)&2.2=100\% (h=0.193, w=0.242)&7.0=311\% (h=0.350)\\ \hline
0.5&14.9=124\% (h=0.741)&12.0=100\% (h=0.638, w=0.716) &13.3=110\% (h=0.638)\\ \hline 
1.0&30.9=123\% (h=1.000)&25.1=100\% (h=0.741, w=0.741) &24.4=97\% (h=0.861)\\ \hline
\end{tabular}\\
\begin{tabular}{|c|p{1.6in}|p{2.3in}|p{1.75in}|} \hline
\multicolumn{4}{|c|}{\bf Periodic nonadditive} \\ \hline\hline
$\sigma$ & local linear ($h_{opt}$) & local additive ($h_{opt},w_{opt}$) & additive
($h_{opt}$) \\ \hline 
0.1&4.8=130\% (h=0.260)&3.7=100\% (h=0.193, w=0.242)&96.8=2611\% (h=0.166)\\ \hline 
0.5&32.7=97\% (h=0.350)&33.6=100\% (h=0.260, w=0.260)&111.7=333\% (h=0.302)\\ \hline 
1.0&85.7=91\% (h=0.473)&93.9=100\% (h=0.407, w=0.407)&139.2=148\% (h=0.473)\\ \hline
\end{tabular}\\
\caption{Comparison of MISE performance based on 400 observations at different
standard
deviations--optimal parameters are given in the parentheses. Outperformance of
local additive estimator is consequence of smaller $h$ than that for local linear
estimator and smaller additive region ($w<1$) than that for additive estimator.} 
\label{tab:mise}
\end{center}
\end{table}

\begin{figure} [htp]
\includegraphics[width=1.0\textwidth]{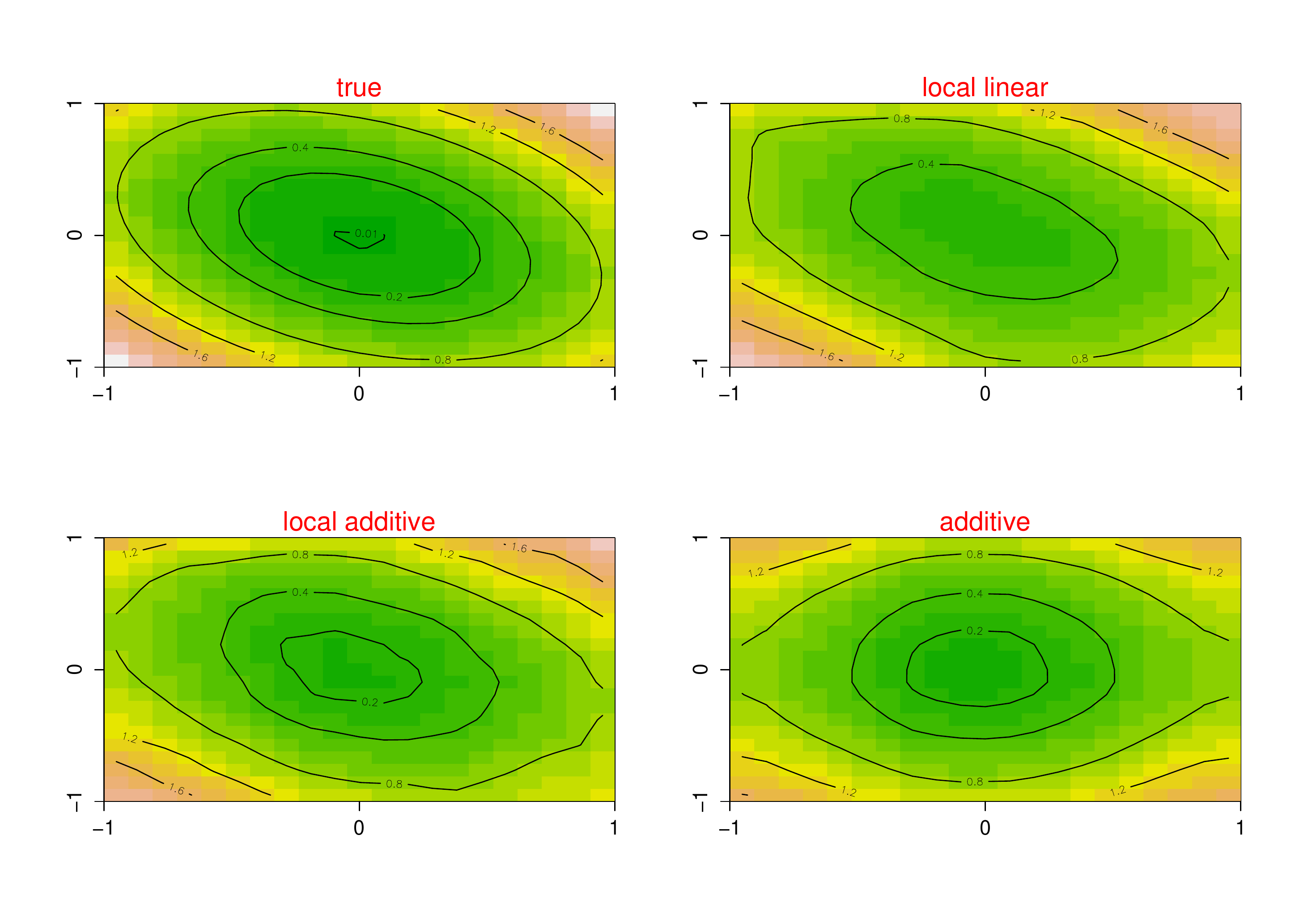} 
\caption{Contour plot of regression function~(\ref{e:r2}) and estimators. Parameters
are chosen to be MISE-optimal from simulation with $\alpha=0.4$ and $\sigma=0.5$. Additive
estimator fails to capture nonadditive structure.
While local linear estimator and local additive estimator show compatible performance,
local additive estimator incurs smaller bias at the center due to smaller bandwidth.}
\label{fig:d2mise0}
\end{figure}

\begin{figure} [htp]
\includegraphics[width=1.0\textwidth]{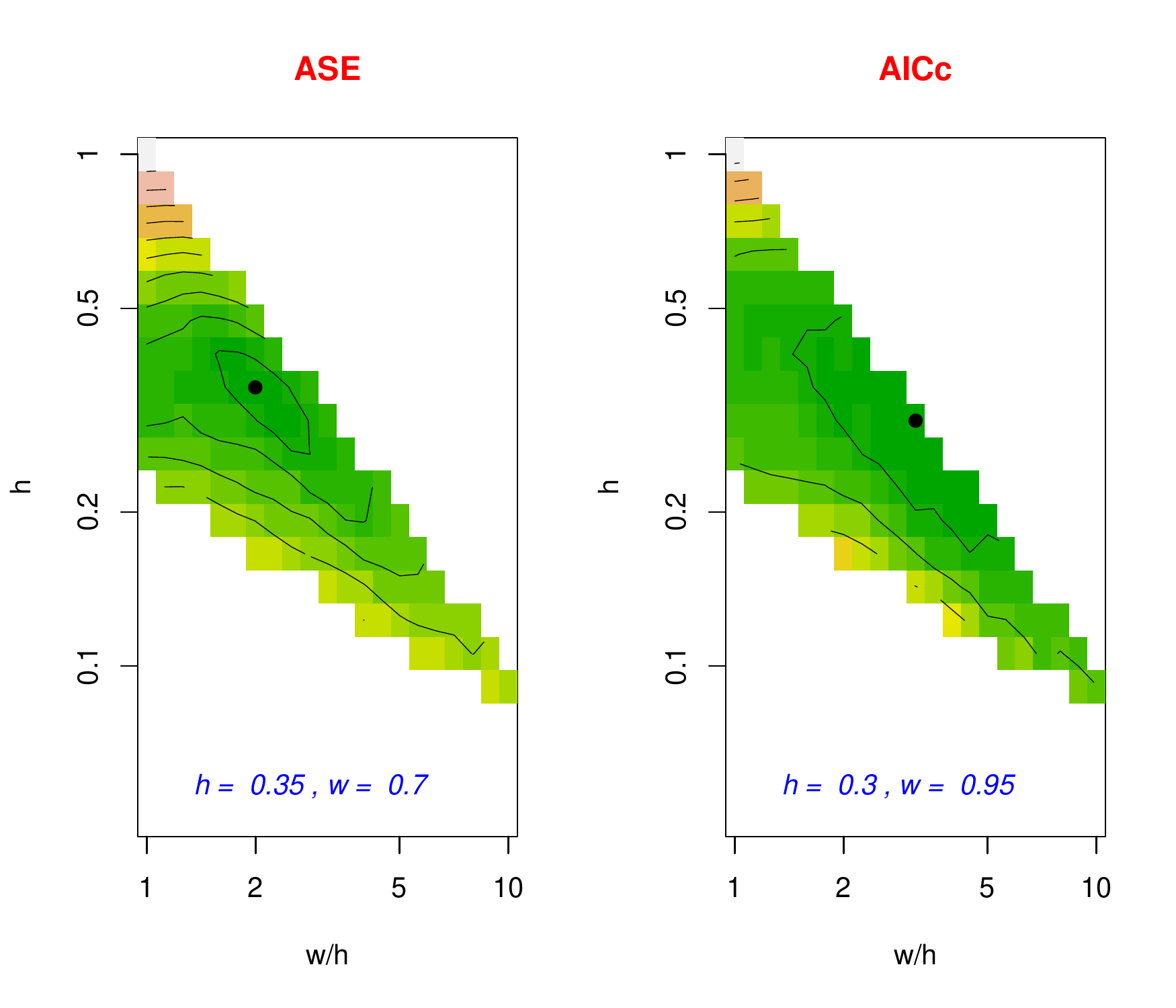} 
\caption{ASE and parameter selection by AIC$_{C}$ for regression function~(\ref{e:r2}) and
design used in Figure~\ref{fig:d2mise0} with $\sigma=0.5$.}
\label{fig:d2aic}
\end{figure}

\begin{figure} [htp]
\includegraphics[width=1.0\textwidth]{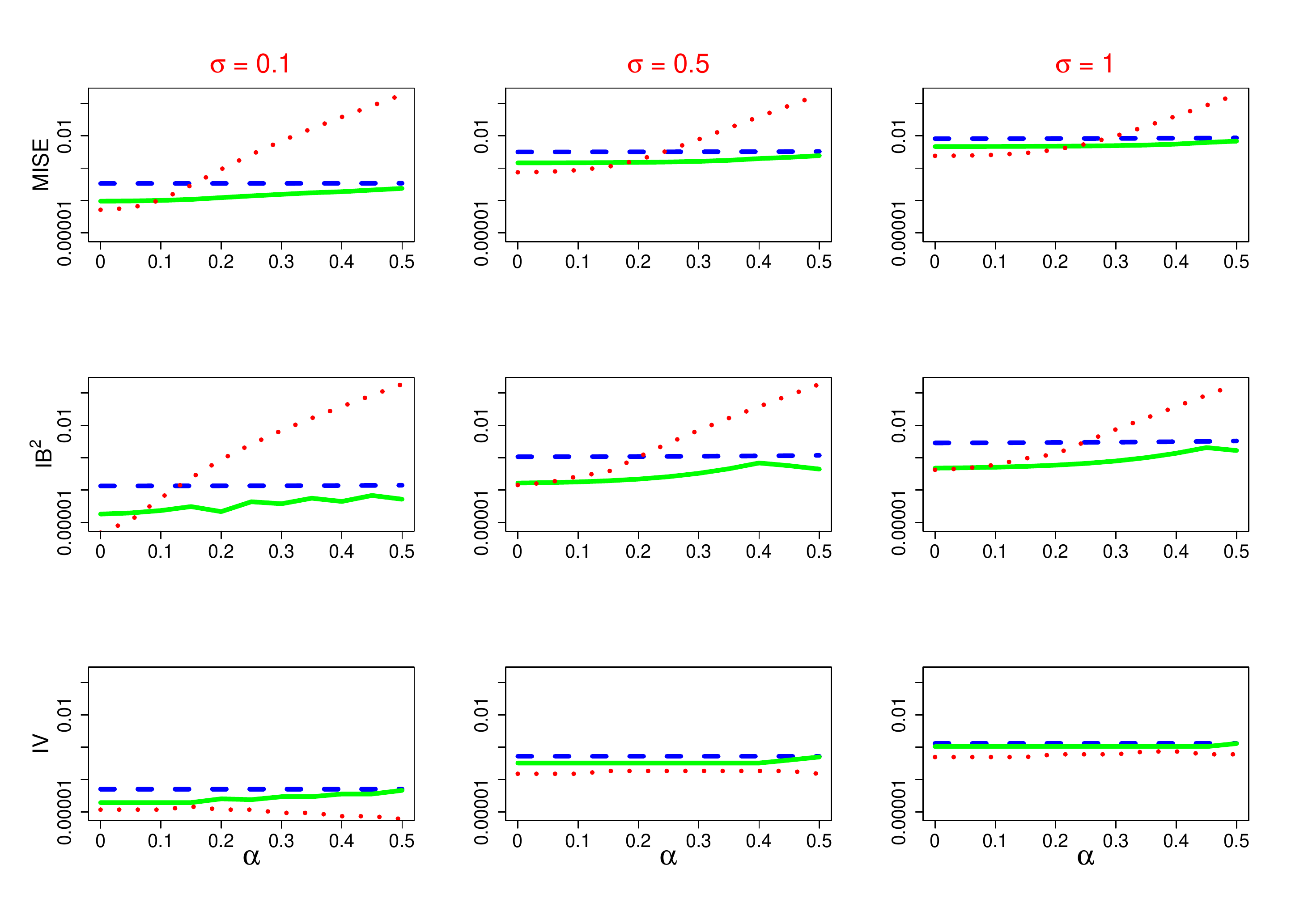}
\caption{Effect of nonadditive regression function on the MISE performance for
MISE-optimal parameters. MISE (first row), integrated squared bias (second row) and
variance (third row) as functions of $\alpha$ in (\ref{e:r2}) is plotted on a log scale
for increasing $\sigma$. Local linear estimator (dashed line) is not affected but additive
estimator (dotted line) dramatically deteriorates. Local additive estimator (solid line)
shows relatively robust performance.} \label{fig:d2mise2}
\end{figure}


\begin{figure} [htp]
\includegraphics[width=1.0\textwidth]{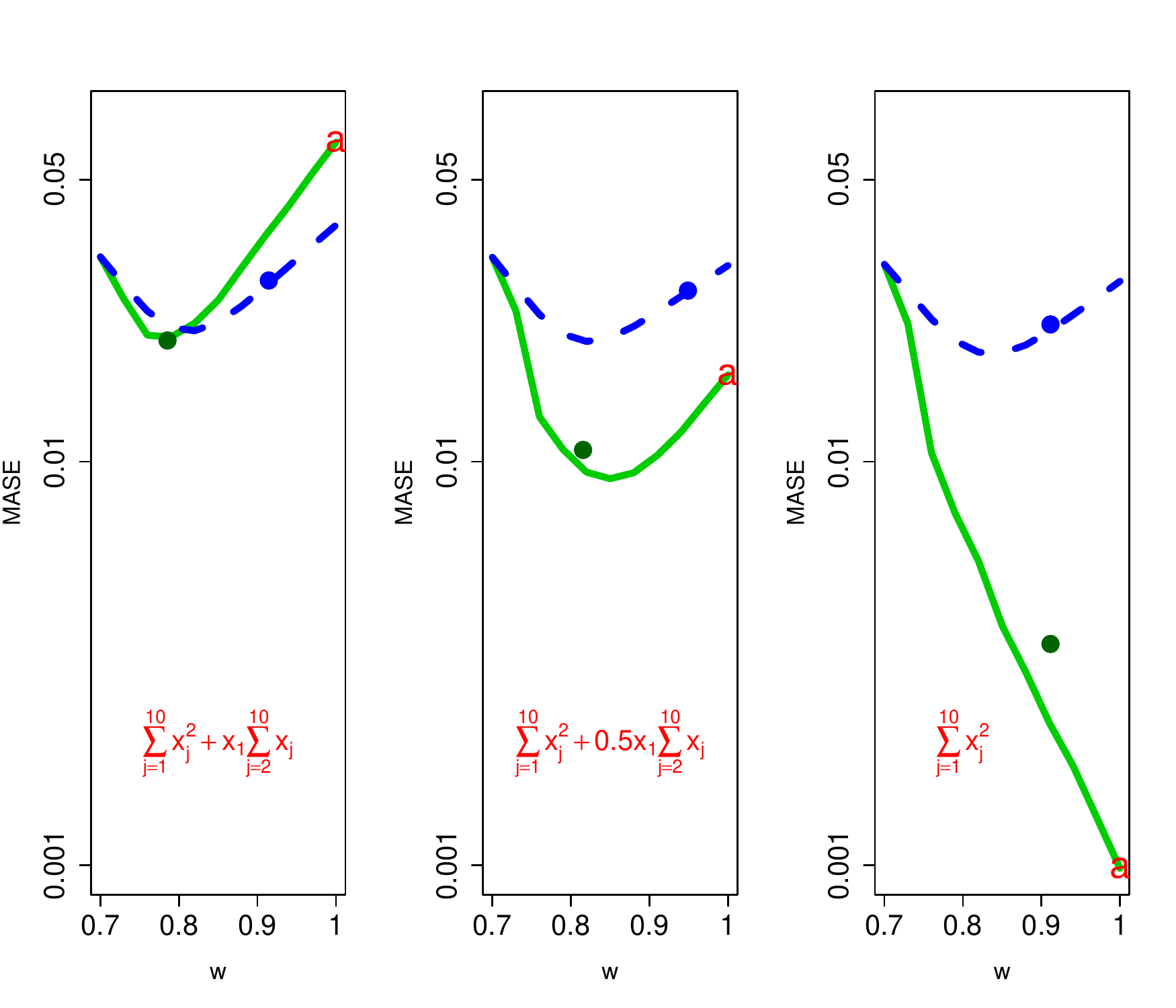} 
\caption{Comparison of unconditional MASE performance of a 10--dimensional regression
function~(\ref{e:r10}) for local linear estimator (-\,-), local additive estimator (--)
and additive estimator (``a"). $x$-axis represents bandwidths for local linear
estimator and
$w$ for local additive estimator with an internal choice by {\it gam} of $h$ at given $w$.
Dots and ``a" show mean MASE at GCV-optimal smoothing parameters vs.  mean
GCV-optimal $w$.}
\label{fig:d10mise}
\end{figure} 

\begin{figure} [htp]
\includegraphics[width=1.0\textwidth]{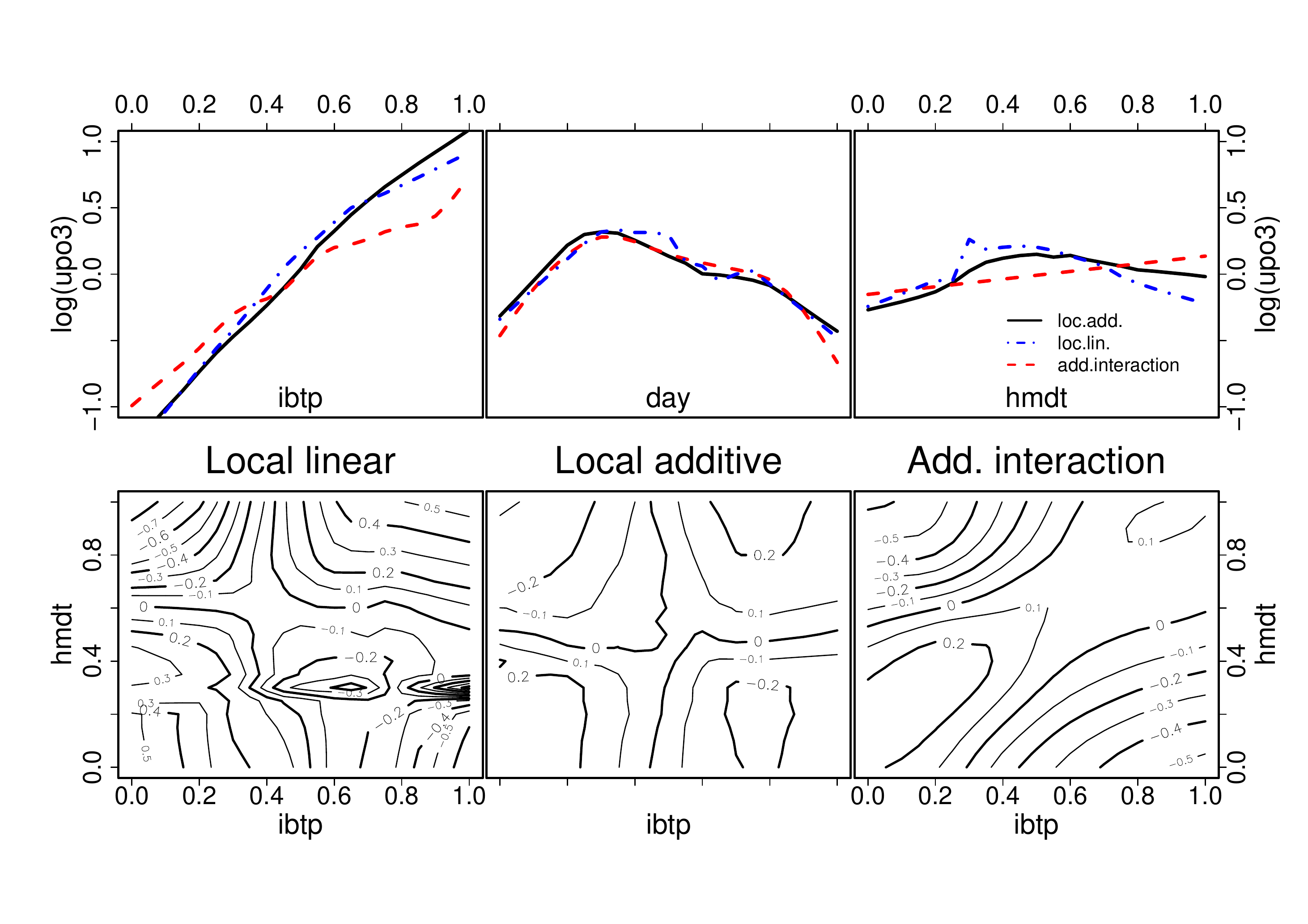} 
\caption{Comparison of local linear, local additive, and additive spline with
interaction estimators. Top row
shows univariate additive components and bottom row shows bivariate components of
ibtp and hmdt for each estimator.}
\label{fig:ozone}
\end{figure}

\section*{Appendix}

\paragraph{Proof of Proposition~\ref{prop:pls}} We use the following fact for additive
regression functions
\[
tr(H) \rt K(0)\sum_{j}1/h_j:=tr(H)_{\infty} \,,
\]
which can be deduced from (6.11a) or (6.11) in Mammen and Park~(2005).
Thus, PLS (p.~1269, Mammen and Park~2005) is defined for additive regression functions as
\[
PLS = \hat\sigma^2\left(1+2\frac{tr(H)_{\infty}}{n}\right) \,.
\]
In this form, it can be generalized to nonadditive functions. Firstly from the definition
of AIC$_T$, it can be written as
\[
AIC_T + 1 = \frac{1}{\sigma^2}\left(PLS +
2(\sigma^2-\hat\sigma^2)\frac{tr(H)}{n}+2\hat\sigma^2\frac{tr(H)-tr(H)_{\infty}}{n}
\right) \,.
\]
Then observe that
\[
2(\sigma^2-\hat\sigma^2)\frac{tr(H)}{n}+2\hat\sigma^2\frac{tr(H)-tr(H)_{\infty}}{n} =
o\left(\frac{tr(H)}{n}\right) \,.
\]
Therefore, it follows that
\[
AIC_T + 1 = \frac{1}{\sigma^2}\left(PLS + o\left(\frac{tr(H)}{n}\right)\right) \,,
\]
as long as $\hat\sigma^2$ is consistent and $\hat r$ is stable. \hfill$\Box$

\paragraph{Proof of Proposition~\ref{prop:aic}} 
AIC$_T$ can be written as
\begin{eqnarray*}
\mbox{AIC}_T &=& \frac{1}{n\sigma^2}\r^\prime(I-H)^\prime(I-H)\r +
\frac{1}{n\sigma^2}\verr^\prime(I-H)^\prime(I-H)\verr \\
& & +\frac{1}{n\sigma^2} 2\verr^\prime(I-H)^\prime(I-H)\r - 1 + \frac{2 tr(H)}{n} \,.
\end{eqnarray*}
Observe that
\[
\verr^\prime(I-H)^\prime(I-H)\verr = \verr^\prime\verr - 2E[\verr^\prime H\verr] +
O_p(\sqrt{V[\verr^\prime H\verr]}) + E[\verr^\prime H^\prime H\verr]
+ O_p(\sqrt{V[\verr^\prime H^\prime H\verr]})  \,.
\]
Since $E[\verr^\prime H\verr] = tr(H)\sigma^2$, we have
\begin{eqnarray*}
\lefteqn{\mbox{AIC}_T - \Big(\frac{1}{n\sigma^2}\verr^\prime\verr - 1\Big)} \\
&=& \frac{1}{n\sigma^2}||(I-H)\r||^2 + \frac{1}{n\sigma^2}E[\verr^\prime H^\prime H\verr]
+\frac{1}{n\sigma^2} 2\verr^\prime(I-H)^\prime(I-H)\r \\
& & + \frac{1}{n\sigma^2}O_p(\sqrt{V[\verr^\prime H\verr]}) +
\frac{1}{n\sigma^2}O_p(\sqrt{V[\verr^\prime H^\prime H\verr]})\big) \,.
\end{eqnarray*}

The rest follows from a series of lemmas below.

\begin{lem}
\[
tr((H^\prime H)^\prime(H^\prime H))=O\big(tr(H^\prime H)\big)=O\big(1/(w^{d-1}h)\big)
\]
\end{lem}

{\bf Proof:} Denote by $H_i$ the hat matrix of the additive estimator used for local
additive estimation at $x_0=X_i$.
Then, inflating the matrix to an $n\times n$ matrix, the $i$th line of $H$ is the $i$th
line of $H_i$ := $H_{i,i}$\,.
Now, considering the form of the estimator $\hat{r}_i = \hat{r}_0+ \hat{r}_1+ \ldots +
\hat{r}_d$, where all components are oracle, we have 
\[
H_{i,j} = \left\{\begin{array}{cc}
O(\frac{1}{\tilde n\tilde h}) & \text{ if for all } k:|X_{i_k}-X_{j_k}|\leq w \text{ and
for some } k: |X_{i_k}-X_{j_k}|\leq h \\
O(\frac{1}{\tilde n}) & \text{ if for all } k: |X_{i_k}-X_{j_k}|\leq w \text{ and for 
all } k: |X_{i_k}-X_{j_k}|> h\\
0 & \text{ otherwise}\end{array}\right.
\]
Note, that these $O()$s are uniform over $X$ because of (A.5), using Gao (2003) as in
Studer et al. (2005).
Let's first look at $tr(H'H)$.
We have 
\[
H_{i,i}' H_{i,i} = O\left(O(\frac{1}{\tilde n\tilde h})^2  O(\tilde n\tilde h) +
O(\frac{ 1}{ \tilde n})^2 O(\tilde n)\right) = O\left(\frac{1}{\tilde n\tilde h} \right) =
O\left(\frac{1}{nw^{d-1}h} \right). 
\]
Consequently, 
\[ 
tr(H'H) =  O\left(\frac{1}{w^{d-1}h} \right). 
\]

Now, look at the general elements of $H'H$.
With a slight abuse of notation, 
\begin{eqnarray*} 
H_{i,i}'H_{j,j} &=& O\left(O(\frac{1}{\tilde n\tilde h})^2 
O(\tilde n\tilde h) + O(\frac{1}{\tilde n\tilde h})O(\frac{ 1}{ \tilde n})  O(\tilde
n\tilde h) + O(\frac{ 1}{ \tilde n})^2 O(\tilde n)\right) \\
& &\qquad\text{ if for all } k: (X_{i_k}\pm w)\cap (X_{j_k}\pm w) \not= \varnothing \\
 & &\qquad\text{ and for some } k:   (X_{i_k}\pm h)\cap (X_{j_k}\pm h)\not= \varnothing 
\\
& & O\left(O(\frac{1}{\tilde n\tilde h})O(\frac{ 1}{ \tilde n})  O(\tilde n\tilde h) +
O(\frac{ 1}{ \tilde n})^2 O(\tilde n)\right)\\
& & \qquad\text{ if for all } k:(X_{i_k}\pm w)\cap (X_{j_k}\pm w) \not= \varnothing \\
 & & \qquad\text{ and for some } k:   (X_{i_k}\pm w)\cap (X_{j_k}\pm h)\not= \varnothing 
\\
 & & \qquad\text{ or }   (X_{i_k}\pm h)\cap (X_{j_k}\pm w)\not= \varnothing  \\
 & & O\left(O(\frac{ 1}{ \tilde n})\right) \qquad\text{ if for all } k: (X_{i_k}\pm
w)\cap (X_{j_k}\pm w) \not= \varnothing \\
& & \qquad \text{ and } (X_{i_k}\pm w)\cap (X_{j_k}\pm h)= \varnothing \text{ and }
(X_{i_k}\pm h)\cap (X_{j_k}\pm w)= \varnothing  \\
 & & 0 \quad\text{ otherwise} \,.
\end{eqnarray*} 
Finally,
\begin{eqnarray*} 
H_{i,i}'H_{j,j} &=& O(\frac{1}{\tilde n\tilde h})   \text{ if for all } k: (X_{i_k}\pm
w)\cap (X_{j_k}\pm w) \not= \varnothing \\
 & & \qquad \text{ and for some } k:   (X_{i_k}\pm h)\cap (X_{j_k}\pm h)\not= \varnothing 
\\
 & & O(\frac{ 1}{ \tilde n})  \text{ if for all } k: (X_{i_k}\pm w)\cap (X_{j_k}\pm w)
\not= \varnothing  \\
& & \qquad \text{ and for all } k:   (X_{i_k}\pm h)\cap (X_{j_k}\pm h)= \varnothing \\
 & & 0 \text{ otherwise}\,.
\end{eqnarray*} 
Thus, $H'H$ has the same structure as $H$, of course with different constants and larger
non-zero regions, but all of the same order. Therefore, 
\[
tr(H'HH'H) = O(tr(H'H)) = O\left(\frac{1}{w^{d-1}h} \right). 
\]

\begin{lem}
\[
\frac{1}{n\sigma^2}\verr^\prime(I-H)^\prime(I-H)\r = \frac{1}{n\sigma^2}<(I-H)\verr,
(I-H)\r> = O_p\big(\frac{h^2+w^4}{\sqrt{n}}\big)
\]
\end{lem}

{\bf Proof:} First note that $(I-H)\r  = O((h^2+w^4)\one)$. 
\begin{eqnarray*}
\lefteqn{\frac{1}{n\sigma^2}<(I-H)\verr, (I-H)\r>} \\ 
&\leq& \frac{1}{n\sigma^2}||(I-H)\verr||||(I-H)\r|| \\
&\leq& \frac{1}{n\sigma^2}O_p(\sqrt{n})O((h^2+w^4)\one) =
O_p\big(\frac{h^2+w^4}{\sqrt{n}}\big)
\end{eqnarray*}
where the last inequality follows from
\[
||(I-H)\verr||^2 = O_p(tr(V[||(I-H)\verr||^2)) = O_p(tr((I-H)(I-H)^\prime\sigma^2) =
O_p(n) \,.
\]

\begin{lem}
\[
V[\verr^\prime H\verr] = V[<\verr, H\verr>] = O(E[||H\verr||^2]) 
\]
If $tr((H^\prime H)(H^\prime H))=O(tr(H^\prime H)$, then 
\[
V[\verr^\prime H^\prime H\verr] = V[||H\verr||^2]= O(E[||H\verr||^2])
\]
\end{lem}

{\bf Proof:} Similar to the proof of lemma 5 in the appendix of Studer et al. (2005), we
use the following fact: For symmetric matrices $B$ and $C$ and $E[\verr^4] =
(3+\kappa)\sigma^4$,
\[
Cov(\verr^\prime B\verr, \verr^\prime C\verr) = 2\sigma^4tr(BC) + \kappa\sigma^4tr(B\cdot 
diag(C))
\]
Putting $B=C=\frac{1}{2}(H+H^\prime)$ gives
\begin{eqnarray*}
V[\frac{1}{2}\verr^\prime(H+H^\prime)\verr] &=& 
2\sigma^4(\frac{1}{4}tr(HH+2H^\prime H+H^\prime H^\prime) + \kappa\sigma^4
tr(\frac{1}{4}(H^\prime+H) diag(H+H^\prime)) \\
V[\verr^\prime H\verr] &=& \sigma^4(tr(HH + H^\prime H) + \kappa\,tr(diag(H)^2) \\
&\leq& \sigma^4(tr(HH)+tr(H^\prime H) + |\kappa| tr(H^\prime H)
\end{eqnarray*}
Using the equivalence of the trace to Hilbert-Schmidt norm,
\[
||H||_{HS}^2 = tr(H^\prime H)
\]
it follows that
\[
tr(HH) = <H^\prime, H>_{HS} \leq ||H^\prime||_{HS}||H||_{HS} = ||H||_{HS}^2 =
tr(H^\prime H) \,.
\]
Hence,
\begin{eqnarray*}
V[\verr^\prime H\verr] &\leq& \sigma^4(2tr(H^\prime H)+|\kappa| tr(H^\prime H) ) \\
&=& \sigma^4(2+|\kappa|) tr(H^\prime H) \\
&=& \sigma^2(2+|\kappa|) E[||H\verr||^2]
\end{eqnarray*}
Thus, $V[<\verr, H\verr>] = O(E[||H\verr||^2])$. 
Moreover, replacing $H$ by $H^\prime H$ in the above leads to
\begin{eqnarray*}
V[\verr^\prime H^\prime H\verr] &\leq& \sigma^2(2+|\kappa|)
tr((H^\prime H)^\prime(H^\prime H)) \,.
\end{eqnarray*} 

Hence, if $tr((H^\prime H)^\prime(H^\prime H))=O(tr(H^\prime H))$, then $V[||H\verr||^2]
= O(tr(H^\prime H)) = O(E[||H\verr||^2])$. \hfill$\Box$

\noprint{
\section*{Appendix}

\subsection{ Asymptotic properties of the SBE } \label{sec:add}

The normal equation derived in the previous section will serve as a basis of our further development to study properties of the estimator. Suppose that data are observed with random error $\varepsilon$. 
If we can show that the operator in the left hand side converges with some (unknown) order, such that
\[
(\P_{add}\S_*\P_{add})^{-1} = (\S_{add,\infty})^{-1}(1+o(1))\,,
\]
then the properties of the estimator can be studied from
\[
\hat\r_{add} = \S_{add,\infty}^{-1}\P_{add}\r_{L}(1+o(1)) \,.
\]
Here $\P_{add}\r_{L}$ has relatively simple formulas to work with.
Indeed, the limiting operator exists for the additive estimator, as is shown in Studer et al.~(2005) and Mammen et al.~(1999).

This knowledge of the limiting operator will be useful to study properties of the estimator for general regression function, as they may be written as
\begin{eqnarray*}
B[\hat \r_{add}(\x)] &=& (\S_{add,\infty}^{-1}E[\P_{add}\r_L(\x)]- \r(\x))(1+o_p(1)) \\
V[\hat \r_{add}(\x)] &=& (\S_{add,\infty}^{-1})^2V[\P_{add}\r_L(\x)](1+o_p(1)) \,.
\end{eqnarray*}

Coversion to the local additive estimation gives rise to a complication with design density function. In the transformed space, the density function is not static, but a function of the local neighborhood $w$. The convergence therefore should account for the behaviors of the parameter $w$ as well as the smoothing parameter $h$ as $n$ increases. An extended result for local additive estimator is given in Lemma~\ref{lem:sadd} in Section~\ref{sec:proof}.

As we shall see in the subsequent sections, the intricate machinery with respect to the dynamic density function actually helps, in our case, remove the global dependence of additive estimator. To briefly elaborate the idea, let us first introduce some more notations. Put $\mu_j(K^l) = \int u^jK^l(u)\,du$. Additionally for a $d$--dimensional function $g$ we write marginal functions as
\[
(g)_j(x_j) = \int g(\x)\,d\x_{-j} \,,\quad j=1,\cdots,d\,.
\]
2, 3--dimensional marginals, $(g)_{j,k}$, $(g)_{j,k,l}$ are defined similarly.
These should not be confused with $(g)_j'(x_j)$ and $(g)_j''(x_j)$, which are derivatives of 1--dimensional marginals.

Then, it can be shown that
\begin{eqnarray} 
(\S_{add,\infty}\r)^0(\x) &=& \frac{1}{2^{d-1}}\sum_{j=1}^d (f)_j(x_j)(\P_{add}\r)_j^0(x_j) + \frac{1}{2^{d-1}}\sum_{j=1}^d \sum_{k\neq j}^d\int (f)_{j,k}(x_j,u)(\P_{add}\r)^0_k(u)\,du \nonumber\\
& & - \frac{(d-1)}{2^d}\sum_{j=1}^d \int (f)_j(u)(\P_{add}\r)^0_j(u)\,du  \label{e:sadd0}\\
(\S_{add,\infty}\r)^j(\x) &=& \frac{1}{2^{d-1}}\mu_2(K)(\P_{add}\r)^j(x_j)(f)_j(x_j) \nonumber\,.
\end{eqnarray}
An alternative formula for the projection operator $\P_{add}\r$ is given in
Lemma~\ref{lem:padd} in Section~\ref{sec:add}.
Observe that the inverse operator involves all $d+1$ components of the estimator, which hinders the study of an additive estimator for general regression function. 
When the true regression function is assumed to be additive, however, simplification of bias calculation is possible. Write
\[
B[\hat\r_{add}(\x)] = \S_{add}^{-1}E[\P_{add}\r_L(\x)-\S_{add}\r(\x)] \,,
\]
and the order of bias follows from
\[
E[\P_{add}\r_L(\x) -\S_{add}\r(\x)]\,.
\]
Simplification of the equation is given in Proposition~\ref{prop:badd}. 
As is explained, we focus on the quantity $\P_{add} r_L$ to derive asymptotic results.
Because of similarity to the additive framework, general derivations will be made based on
an additive estimator first in the following section. Properties of additive estimator for
additive function have been well established in Mammen et al.~(1999). Our formulation is
slightly different but the results agree with them. In addition, some aspects of the
additive estimator regarding nonadditive function will be studied.

Let $\s = 2^{d-1}\P_{add}\r_L$. Recall that $\r_L(\x) = L(\x)$ defined in~(\ref{e:L0}) and
(\ref{e:Lk}). Assume that $K$ is a product kernel. It follows from Lemma~\ref{lem:padd},
\begin{eqnarray*}
s^0(\x) &=& \frac{1}{n}\sum_{i=1}^n Y_i\Big\{\sum_{j=1}^d K_{h_j}(X_{i,j},x_j)-\frac{(d-1)}{2}\Big\} := \frac{1}{n}\sum_{i=1}^n Y_iW_i  \\
s^j(\x) &=& \frac{1}{n}\sum_{i=1}^n Y_i\frac{X_{i,j}-x_j}{h_j}K_{h_j}(X_{i,j},x_j)\,.
\end{eqnarray*}

Since $\S_{add,\infty}$ is a bounded operator, the bias and variance derivation can be based on those of $\s^0(\x)$, which resembles a standard kernel type estimator. As $n\rt\infty$, asymptotic behavior of those conditional bias and variance may be studied in a standard manner. Because we are interested in ``intercept'' of the estimator, we focus on $s^0(\x)$ from now on. 

\begin{prop} \label{prop:vadd}
Let $V_{add}(\x) = 2^{d-1}V[(\P_{add}\r_L)^0(\x)]$.
\[
V_{add}(\x) = \frac{\sigma^2}{n}\mu_0(K^2)\sum_{j=1}^d\frac{1}{h_j}(f)_j(x_j)\big(1+o_p(1)\big) \,.
\]
\end{prop}
The result follows from standard calculation.

Bias calculation depends on whether or not the true regression function is additive. Let $r_{add}=\P_{add}\r$ be the additive projection of the true regression function. Write
\[
\P_{add}\r_L(\x) = \S_{add}\r_{add}(\x) + \Big(\P_{add}\r_L(\x) - \S_{add}\r_{add}(\x)\Big) \,.
\]
Then,
\[
E[\hat\r_{add}(\x)] = \r_{add}(\x) + \S_{add}^{-1}E[(\P_{add}\r_L)(\x) - (\S_{add}\r_{add})(\x)] \,.
\]
Non--additive bias is reflected on the last term.
Define
\[
B_{add}(\x) = 2^{d-1}E[(\P_{add}\r_L - \S_{add}\r_{add})^0(\x)] \,.
\]
This quantity may be used as an indication of nonadditive bias. Its full expression is
summarized in the following.

\begin{prop} \label{prop:badd}
\begin{eqnarray*}
\lefteqn{B_{add}(\x)} \\
&=& \sum_{j=1}^d\int\frac{1}{n}\sum_{i=1}^n K_h(\X_i,\x)\big(r(\x)-r_{add}^0(\x)\big)\,d\x_{-j} \\
& & + \sum_{j,k=1}^d\int\frac{1}{n}\sum_{i=1}^n K_h(\X_i,\x)\frac{X_{i,k}-x_k}{h_k}\big(r_k'(\x)h_k-r_{add}^k(x_k)\big)\,d\x_{-j} \\
& & - \frac{d-1}{2}\int\frac{1}{n}\sum_{i=1}^n K_h(\X_i,\x)\big(r(\x)-r_{add}^0(\x)\big)\,d\x \\
& & -\frac{d-1}{2}\sum_{k=1}^d\int\frac{1}{n}\sum_{i=1}^n K_h(\X_i,\x)\frac{X_{i,k}-x_k}{h_k}\big(r_k'(\x)h_k-r_{add}^k(x_k)\big)\,d\x \\
& & + \frac{1}{2}\sum_{j,k,l=1}^d\int\frac{1}{n}\sum_{i=1}^n K_h(\X_i,\x)(X_{i,k}-x_k)(X_{i,l}-x_l)r_{k,l}''(\x)\,d\x_{-j} \\
& & -\frac{d-1}{4}\sum_{k,l=1}^d\int\frac{1}{n}\sum_{i=1}^n K_h(\X_i,\x)(X_{i,k}-x_k)(X_{i,l}-x_l)r_{k,l}''(\x)\,d\x + o_p(h^2) \,.
\end{eqnarray*}
\end{prop}
\paragraph{ Proof of Proposition~\ref{prop:badd} } Recall $s(\x) = 2^{d-1}\P_{add}\r_L(\x)$.
\[
E[s^0(\x)] = \sum_{j=1}^d\Big(\frac{1}{n}\sum_{i=1}^n K_{h_k}(X_{i,j},x_j)r(\X_i)\Big) - \frac{d-1}{2}\Big(\frac{1}{n}\sum_{i=1}^n r(\X_i)\Big)
\]
Thus, we have
\begin{eqnarray*}
\lefteqn{\frac{1}{n}\sum_{i=1}^n K_{h_k}(X_{i,j},x_j)r(\X_i)} \\
&=& \int \frac{1}{n}\sum_{i=1}^n K_h(\X_i,\x) r(\X_i)\,d\x_{-j} \\
&=& \int \frac{1}{n}\sum_{i=1}^nK_h(\X_i,\x) r(\x)\,d\x_{-j} + 
\int \frac{1}{n}\sum_{i=1}^nK_h(\X_i,\x) \sum_{k=1}^d(X_{i,k}-x_k)r_k'(\x)\,d\x_{-j} \\
& & + \frac{1}{2}\int \frac{1}{n}\sum_{i=1}^nK_h(\X_i,\x) \sum_{k,l=1}^d(X_{i,k}-x_k)(X_{i,l}-x_l)r_{k,l}''(\x)\,d\x_{-j} + o_p(h^2) \,,
\end{eqnarray*}
and
\begin{eqnarray*}
\lefteqn{\frac{1}{n}\sum_{i=1}^n r(\X_i)} \\
&=& \int \frac{1}{n}\sum_{i=1}^n K_h(\X_i,\x) r(\X_i)\,d\x \\
&=& \int \frac{1}{n}\sum_{i=1}^nK_h(\X_i,\x) r(\x)\,d\x + 
\int \frac{1}{n}\sum_{i=1}^nK_h(\X_i,\x) \sum_{k=1}^d(X_{i,k}-x_k)r_k'(\x)\,d\x \\
& & + \frac{1}{2}\int \frac{1}{n}\sum_{i=1}^nK_h(\X_i,\x) \sum_{k,l=1}^d(X_{i,k}-x_k)(X_{i,l}-x_l)r_{k,l}''(\x)\,d\x + o_p(h^2) \,.
\end{eqnarray*}
Hence, we have
\begin{eqnarray*}
\lefteqn{2^{d-1}(\S_{add}\r_{add})^0(\x)} \\
&=& \sum_{j=1}^d\int\frac{1}{n}\sum_{i=1}^n K_h(\X_i,\x) r_{add}^0(\x)\,d\x_{-j}-\frac{d-1}{2}\int\frac{1}{n}\sum_{i=1}^n K_h(\X_i,\x) r_{add}^0(\x)\,d\x \\
& &+ \sum_{j,k=1}^d\int \frac{1}{n}\sum_{i=1}^n K_h(\X_i,\x)(X_{i,k}-x_k) r_{add}^k(x_k)\,d\x_{-j} \\
& & - \frac{d-1}{2}\sum_{k=1}^d\int\frac{1}{n}\sum_{i=1}^n  K_h(\X_i,\x)(X_{i,k}-x_k) r_{add}^k(x_k)\,d\x \,.
\end{eqnarray*}
and the result follows.

As is expected, bias for general regression function is of order $O(1)$. The smaller order
                                                                                          
 comparison terms would not be necessary to show that. But we used a redundant expression
to make compare case of an additive function. When $r$ is additive and $r_{add}$ is
$||\cdot||_*$ projection, $r_{add}(\x)$ is equal to $r(\x)$ and $r_{add}^k(\x)$ to
$h_kr_k'(x_k)$ and thus further simplification is possible.

Combining with~(\ref{e:sadd0}), an explicit derivation of bias for an additive function is possible, as is shown in Mammen et al.~(1999). We state our version to be a reference for the next section.

\begin{cor} 
Suppose that $r$ is additive: $r(\x) = \sum_{j=1}^d r_j(x_j)$. Under the assumptions
(A1)-(A5), it holds that
\[
B[\hat r_{add}(\x)] =  \frac{1}{2}\mu_2(K)\sum_{j=1}^d h_j^2 r_j''(x_j) + o_p(h^2) + O_p((nh^2)^{-\frac{1}{2}})\,.
\]
\end{cor}

We have not enforced a conventional additive representation such as in~(\ref{eq:radd}) with $\int r_j(x_j)(f)_j(x_j)\,dx_j = 0$. If this convention were used, the bias would be expressed in terms of $r_k''(x_k) - \int r_k''(x_k)(f)_k(x_k)\,dx_k$, the same as that appeared in Mammen et al.~(1999).

\bigskip
It has been noted in Theorem~\ref{thm:main} that the behavior of the estimator for
bilinear function is crucial to reveal some aspect of the additive estimator for
nonadditive function. The connection will be made in Section~\ref{sec:proof} after
introducing the local additive estimator properly. 
However, the local additive estimator, by its construction, inherits properties from additive estimator and thus we formulate results here. 

\begin{prop} \label{prop:nadd}
Suppose that the design points are equally spaced and 
\[
b(\x) = (x_j - \bar X_j)(x_k - \bar X_k) \,.
\]
Then
\[
E[\hat b_{add}(\x)] \equiv 0
\]
independently of $h$. 
\end{prop}

\paragraph{ Proof of Proposition~\ref{prop:nadd} } Recall that
\[
\S_{add}\hat\r_{add} = \P_{add}\r_L \,.
\]

\begin{eqnarray}
E[s^0(\x)] &=& \frac{1}{n}\sum_{i=1}^n r(\X_i)W_i \label{e:conbias0}\,, \\
V[s^0(\x)] &=& \frac{\sigma^2}{n^2}\sum_{i=1}^n W_i^2 \label{e:convar0}\,.
\end{eqnarray}
Similarly, for the $j$th component, $j=1,\cdots, d$, we have
\begin{eqnarray}
E[s^j(\x)] &=& \frac{1}{n}\sum_{i=1}^n r(\X_i)\frac{X_{i,j}-x_j}{h_j}K_{h_j}(X_{i,j},x_j) \label{e:conbiask}\\
V[s^j(\x)] &=& \frac{\sigma^2}{n^2}\sum_{i=1}^n \Big(\frac{X_{i,j}-x_j}{h_j}\Big)^2K_{h_j}^2(X_{i,j},x_j) \label{e:convark}\,.
\end{eqnarray}

Because of symmetry of design points, we can show from~(\ref{e:conbias0}) and ~(\ref{e:conbiask}) that
\[
E[(\P_{add}r_L)(\x)] \equiv 0 \,.
\]

\begin{cor} \label{cor:produnif}
Suppose that the design points are equally spaced and
\[
b(\x) = (x_j-x_j^0)(x_k-x_k^0) \,,
\]
where $\x^0 = (x_j^0, x_k^0)$ is a fixed constant. Then
\[
B[\hat b_{add}(\x)] =  - (x_j-\bar X_j)(x_k-\bar X_k) \,.
\]
In particular, $B[\hat r_{add}(\bar\X)] = 0$.
\end{cor}

\paragraph{ Proof of Corollary~\ref{cor:produnif}} Observe that
\begin{eqnarray*}
r(\x) &=& (x_j-\bar X_j)(x_k-\bar X_k) + (\bar X_j-x_j^0)(x_k-\bar X_k) \\
& & + (x_j-\bar X_j)(\bar X_k-x_k^0) + (\bar X_j-x_j^0)(\bar X_k-x_k^0) \,.
\end{eqnarray*}
Because the last three terms are linear, the additive estimator is unbiased.  
Thus, it may be written as
\begin{eqnarray*}
E[\hat r_{add}(\x)] &=& 0 +(\bar X_j-x_j^0)(x_k-\bar X_k)+ (x_j-\bar X_j)(\bar X_k-x_k^0) \\
& &  + (\bar X_j-x_j^0)(\bar X_k-x_k^0) \\
&=&function- (x_j-\bar X_j)(x_k-\bar X_k) \,.
\end{eqnarray*}
As a result, bias of product function may be written as
\[
B[\hat r_{add}(\x)] = -(x_j-\bar X_j)(x_k-\bar X_k) \,,
\]
and especially when $\x = \bar\X$, it holds that
\[
B[\hat r_{add}(\bar\X)] = 0 \,.
\]

Observe that bias does not depend on how the bilinear function is centered.
When the design points are not equally spaced, the relation holds asymptotically. 

\begin{cor}\label{cor:addprod}
Suppose
\[
b(\x) = x_jx_k \,.
\]
If two- and three--dimensional marginals of the design variable $\X$ are independent, then we have
\[
E[\hat b_{add}(\x)] \rt 0 \,,
\]
and 
\[
B[\hat b_{add}(\x)] = -(x_j-\bar X_j)(x_k-\bar X_k) + o_p(1) \,.
\]
If two- and three-dimensional marginal densities are continuous, then, $E[\hat b_{add}(\x)]$ is (Lipschitz-) continuous on $[-1,1]^d$. That is,
\[
\left| E[\hat b_{add}(\x)] - E[\hat b_{add}(\y)]\right| \leq L||\x - \y||\,.
\]
\end{cor}
\paragraph{ Proof of Corollary~\ref{cor:addprod}} We have
\[
E[s^0(\x)] = \sum_{l=1}^d \int r(\x)f(\x)\,d\x_{-l} - \frac{d-1}{2}\int r(\x)f(\x)\,d\x +o_p(1) \,.
\]
Observe that
\[
r(\x) = x_jx_k = (x_j-\bar X_j)(x_k-\bar X_k) + x_j\bar X_k + x_k\bar X_j + \bar X_j\bar X_k \,.
\]
Because of linearity of the rest of terms, conditionally on $\X$, we only need to consider the first term for bias. Hence, if $r(\x)=(x_j-\Bar X_j)(x_k-\bar X_k)$, then
\begin{eqnarray*}
\lefteqn{\sum_{l=1}^d \int r(\x)(f)(\x)\,d\x_{-l}}\\
&=& (x_j-\bar X_j)\int (u_k-\bar X_k)(f)_{j,k}(x_j,u_k)\,du_k + (x_k-\bar X_k)\int (u_j-\bar X_j)(f)_{j,k}(u_j,x_k)\,du_j \\
& & + \sum_{l\neq (j,k)}\int (u_j-\bar X_j)(u_k-\bar X_k)(f)_{j,k,l}(u_j,u_k,x_l)\,du_jdu_k \,.
\end{eqnarray*}

If two and three marginal design densities are independent, then 
\begin{eqnarray*}
E[s^0(\x)] &=& (x_j-\bar X_j)f_j(x_j)(\mu_k-\bar X_k) + (x_k-\bar X_k)f_k(x_k)(\mu_j-\bar X_j)\\
& &+\sum_{l\neq(j,k)}(\mu_j-\bar X_j)(\mu_k-\bar X_k)f_l(x_l) - \frac{d-1}{2}(\mu_j-\bar X_j)(\mu_k-\bar X_k) + o_p(1) \,,
\end{eqnarray*}
where $\mu_j = \int u_jf_j(u_j)\,du_j$.

Because $\mu_j-\bar X_j = O_p(n^{-1/2})$, 
\[
E[s^0(\x)] = o_p(1) \,,
\]
which leads to
\[
B_{add}(\x) = -(x_j-\bar X_j)(x_k-\bar X_k)+o_p(1) \,.
\] 
In particular when $\x = \bar\X$, 
\[
B_{add}(\bar\X) = o_p(1) \,.
\]

Moreover, for general $\x$, we have
\begin{eqnarray*}
\lefteqn{[B_{add}(\x)-B_{add}(\y)]} \\
&=& (x_j-y_j)\int (u_k-\bar X_k)f_{j,k}(x_j,u_k)\,du_k + (y_j-\bar X_j)\int (u_k-\bar X_k)\{f_{j,k}(x_j,u_k)-f_{j,k}(y_j,u_k)\}\,du_k \\
& & + (x_k-y_k)\int (u_j-\bar X_j)f_{j,k}(u_j,x_k)\,du_j + (y_k-\bar X_k)\int (u_j-\bar X_j)\{f_{j,k}(u_j,x_k)-f_{j,k}(u_j,y_k)\}\,du_j \\
& & + \sum_{l\neq(j,k)}\int (u_j-\bar X_j)(u_k-\bar X_k)\{(f)_{j,k,l}(u_j,u_k,x_l)-(f)_{j,k,l}(u_j,u_k,y_l)\}\,du_jdu_k + o_p(1) \,.
\end{eqnarray*}

Because $f$ is continuous, it follows that
\[
|B_{add}(\x) - B_{add}(\y)| \leq C||\x - \y||\,,
\]
where $C$ is a constant.
Therefore, for a constant $L$, with probability tending to 1, $E\hat r_{add}(\x)$ is continuous, that is,
\[
|E\hat r_{add}(\x) - E\hat r_{add}(\y)| \leq L||\x-\y|| \,.
\]
for a constant $L$.

\subsubsection{ Fixed uniform design }

We first focus on the case where $r''$ exists and continuous. Because the estimator is linear, it is enough to consider the bilinear function.

\begin{lem} \label{lem:naddladd}
Suppose that 
\[
b(\u) = (u_j-\bar U_j)(u_k-\bar U_k)
\]
and let $\hat b_{add,w}(\u)$ be additive estimator based on design density $f_w=\tilde f$ given in ~(\ref{e:ftilde}).
If design points are equally spaced, then
\[
E[\hat b_{add,w}(\u)] = 0\,.
\]
\end{lem}

\begin{proof}
This follows from Proposition~\ref{prop:nadd}. 
\end{proof}

\begin{prop} \label{prop:unifprod}
If design points are equally spaced, then,
\[
B_{ladd}^{(2)}(\x_0) = -\frac{1}{2}\sum_{j\neq k}r_{j,k}''(\x^*)(\bar X_j^*-x_{0j})(\bar X_k^*-x_{0k}) \,,
\]
where $\bar X_j^*$ is the local average of $\{X_j: X_j \in [x_0-w,x_0+w]\}$ and $\x^* \in [\x_0, \x_0+w\u]$, which is determined by Taylor expansion.
\end{prop}

\paragraph{ Proof of Proposition~\ref{prop:unifprod} } Write
\[
u_ju_k = (u_j-\bar U_j)(u_k-\bar U_k) + (u_j-\bar U_j)\bar U_k + (u_k-\bar U_k)\bar U_j + \bar U_j\bar U_k \,.
\]
By Lemma~\ref{lem:naddladd}, the estimator of the first term is zero.
The additional terms are linear and thus do not add additional bias. Thus,
\[
B[\widehat{\tilde r}_{add}(\u)] = (u_j-\Bar U_j)\bar U_k + (u_k-\bar U_k)\bar U_j + \bar U_j\bar U_k \,,
\]
and, in particular at $\u = \0$, we have
\[
B[\widehat{\tilde r}_{add}(\0)] = -\bar U_j\bar U_k = -\frac{(\bar X_j^*-x_{0j})(\bar X_k^*-x_{0k})}{w^2} \,.
\]
Therefore,
\[
B_{ladd}^{(2)}(\x_0) = \frac{w^2}{2}\sum_{j\neq k}^dr_{j,k}''(\x)B[\widehat{\tilde r}_{add}(\0)] =  -\frac{1}{2}\sum_{j\neq k}r_{j,k}''(\x)(\bar X_j^*-x_{0j})(\bar X_k^*-x_{0k}) \,.
\]

\paragraph{ Proof of Proposition~\ref{prop:ladd_unif_b2}}

\begin{eqnarray*}
\hat r_{ladd}(\x_0) &=& \frac{w^4}{4!}\Big\{2\sum_{l=1}^d\sum_{j\neq k}\Big(\frac{1}{n}\sum_{i=1}^n r_{j,j,k,k}''''(\x_0) U_{i,j}^2U_{i,k}^2K_{\tilde h_l}(U_{i,l},0)\Big) \\
& &  - (d-1)\Big(\sum_{j\neq k}\frac{1}{\tilde n}\sum_{i=1}^{\tilde n} r_{j,j,k,k}''''(\x_0)U_{i,j}^2U_{i,k}^2\Big)\Big\}(1+o_p(1))
\end{eqnarray*}

\begin{eqnarray*}
& & E[U_jU_kK_{\tilde h_l}(U_l,u_l) \\
& & E[U_jU_kU_lU_mK_{\tilde h_j}(U_j,u_j)] = 0 \\
& & E[U_j^2U_kU_lK_{\tilde h_j}(U_j,u_j)] = 0 \\
& & E[U_j^2U_kU_lK_{\tilde h_k}(U_k,u_k)] = 0 \\
& & E[U_j^3U_kK_{\tilde h_l}(U_l,u_l)] = 0 \\
& & E[U_j^3U_kK_{\tilde h_j}(U_j,u_j)] = 0 \\
& & E[U_j^3U_kK_{\tilde h_k}(U_k,u_k)] = 0  \,.
\end{eqnarray*} 
and thus bias is dominated by the term $u_j^2u_k^2$. 
\begin{eqnarray*}
\hat r_{ladd}(\x_0) &=& \frac{w^4}{4!}\Big\{2\sum_{l=1}^d\sum_{j\neq k}\Big(\frac{1}{n}\sum_{i=1}^n r_{j,j,k,k}''''(\x_0) U_{i,j}^2U_{i,k}^2K_{\tilde h_l}(U_{i,l},0)\Big) \\
& &  - (d-1)\Big(\sum_{j\neq k}\frac{1}{\tilde n}\sum_{i=1}^{\tilde n} r_{j,j,k,k}''''(\x_0)U_{i,j}^2U_{i,k}^2\Big)\Big\}(1+o_p(1))
\end{eqnarray*}

\begin{eqnarray*}
E[U_j^2U_k^2] &=& \frac{1}{9} \\
E[U_j^2U_k^2K_{\tilde h_l}(U_l,u_l)] &=& \frac{1}{2}E[U_j^2U_k^2] = \frac{1}{18} \\
E[U_j^2U_k^2K_{\tilde h_j}(U_j,u_j)] &=& \frac{1}{3}\Big(\frac{1}{2}u_j^2 + \frac{\tilde h_j^2}{2}\mu_2(K)\Big) \,.
\end{eqnarray*}

\begin{eqnarray*}
\lefteqn{\sum_{l=1}^d\Big(2E\big[\sum_{j\neq k}r_{j,j,k,k}''''(\x_0)U_j^2U_k^2K_{\tilde h_l}(U_l,0)\big]\Big)- (d-1)E[\sum_{j\neq k} r_{j,j,k,k}''''(\x_0)U_j^2U_k^2]} \\
&=& \sum_{j\neq k}r_{j,j,k,k}''''(\x_0)\Big\{E\big[U_j^2U_k^2\big](d-2) + 2E\big[U_j^2U_k^2\big(K_{\tilde h_j}(U_j,0)+K_{\tilde h_k}(U_k,0)\big)\big] \Big\} \\
& & - (d-1)\sum_{j\neq k} r_{j,j,k,k}''''(\x_0)E\big[U_j^2U_k^2\big] \\
&=& \sum_{j\neq k}r_{j,j,k,k}''''(\x_0)\Big\{2E\big[U_j^2U_k^2\big(K_{\tilde h_j}(U_j,0)+K_{\tilde h_k}(U_k,0)\big)\big] - E\big[U_j^2U_k^2\big] \Big\} \\
&=& \sum_{j\neq k}r_{j,j,k,k}''''(\x_0)\Big\{\mu_2(K)\frac{h_j^2+h_k^2}{3w^2} -\frac{1}{9}\Big\}
\end{eqnarray*}

\begin{eqnarray*}
B_{ladd}^{(2)}(\x_0) &=& \frac{w^4}{4!}\Big\{\sum_{j\neq k} r_{j,j,k,k}''''(\x_0)\Big(\frac{1}{9} +  \frac{\tilde h_j^2+\tilde h_k^2}{3} - \frac{2}{9}\Big)\Big\}(1+o_p(1)) \\
&=&  \frac{w^4}{4!}\Big\{\sum_{j\neq k} r_{j,j,k,k}''''(\x_0)\Big(-\frac{1}{9} +  \frac{\tilde h_j^2+\tilde h_k^2}{3}\Big)\Big\}(1+o_p(1)) \\
&=&  \Big\{-\frac{w^4}{4!\cdot 9}\Big\{\sum_{j\neq k} r_{j,j,k,k}''''(\x_0) + \frac{w^2}{4!\cdot 3}\sum_{j\neq k} r_{j,j,k,k}''''(\x_0)(h_j^2+h_k^2) \Big\}(1+o_p(1))
\end{eqnarray*}
As $h/w \rt 0$, only the first term is needed.

\paragraph{ Alternative Proof of Lemma~\ref{lem:laddprod} } Note that $\P_{add}\tilde r = 0$ and $\tilde r_{k,k}''(\u) = 0, k=1,\cdots,d$. 
By Proposition~\ref{cor:badd}, with $\tilde r$, $\tilde f$ and $\tilde h$ replaced, bias can be expressed as 
\[
2^{d-1}B_{add}(\0) = 2^{d-1}B_{add}(\u)\big|_{\u=0} \,,
\]
where
\begin{eqnarray*}
\lefteqn{2^{d-1}B_{add}(\u)} \\
&=&\Big[\sum_{l=1}^d \Big(\int \tilde r(\u)\tilde f(\u)\,d\u_{-l} + \frac{\mu_2(K)}{2}\sum_{m=1}^d\tilde h_m^2 \int\tilde r(\u) \tilde f_{m,m}''(\u)\,d\u_{-l} \\
& & + \mu_2(K)\sum_{m=1}^d\tilde h_m^2\int\tilde r_m'(\u)(\tilde f)_m'(u_m)\,d\u_{-l}\Big) \\
& & -\frac{d-1}{2}\Big(\int \tilde r(\u) \tilde f(\u)\,d\u + \frac{\mu_2(K)}{2}\sum_{m=1}^d \tilde h_m^2\int\tilde r(\u)\tilde f_{m,m}''(\u)\,d\u \\
& & - \mu_2(K)\sum_{m=1}^d\tilde h_m^2\int\tilde r_m'(\u)(\tilde f)_m'(u_m)\,d\u\Big)\Big]\big(1+o_p(1)\big) \,.
\end{eqnarray*}

From~(\ref{e:ftilde}), partial derivatives of $\tilde f$ may be written as
\begin{eqnarray*}
\frac{\partial}{\partial u_j}(\tilde f)(\u) &=& \frac{1}{2^d f(\x_0)} \Big(wf_j'(\x_0) + \frac{1}{2}w^2\sum_{k=1}^d 2f_{j,k}''(\x_0)u_k\Big) + o(w^2) \\
\frac{\partial^2}{\partial u_j^2}(\tilde f)(\u) &=& \frac{w^2}{2^d f(\x_0)}f_{j,j}''(\x_0)  + o(w^2)\\
\frac{\partial^2}{\partial u_j\partial u_k}(\tilde f)(\u) &=& \frac{w^2}{2^{d}f(\x_0)}f_{j,k}''(\x_0) + o(w^2) \,.
\end{eqnarray*}
Note that the second derivatives do not depend on $\u$ and thus
\[
\int\tilde r(\u)\tilde f_{m,m}''(\u)\,d\u_{-l} = 0 \,,\quad\mbox{ for all } l,m = 1,\cdots,d\,.
\]
The other terms are simplified to
\begin{eqnarray*}
\sum_{l=1}^d \int \tilde r(\u)\tilde f(\u)\,d\u_{-l}
&=& u_j\int u_k(\tilde f)_{j,k}(u_j,u_k)\,du_k + u_k\int u_j(\tilde f)_{j,k}(u_j,u_k)\,du_j \\
& & + \sum_{l\neq (j,k)}\int u_ju_k(\tilde f)_{j,k,l}(u_j,u_k,u_l)\,du_jdu_k \,,\\
\int\tilde r(\u)(\tilde f)(\u)\,d\u &=& \int u_ju_k(\tilde f)_{j,k}(u_j,u_k)\,du_jdu_k \,,
\end{eqnarray*}
and
\begin{eqnarray*}
\sum_{l=1}^d\sum_{m=1}^d \int \tilde r_{m}'(\u)\tilde h_{m}^2(\tilde f)_{m}'(u_{m})\,d\u_{-l} 
&=& \sum_{l=1}^d \int u_j \tilde h_k^2(\tilde f)_k(u_k) + u_k\tilde h_j^2(\tilde f)_j(u_j)\,d\u_{-l} \\
&=&  2^{d-2}u_j\int \tilde h_k^2(\tilde f)_k(u_k)\,du_k + 2^{d-2}u_k\int \tilde h_j^2(\tilde f)_j(u_j)\,du_j \,,\\
\sum_{m=1}^d \int \tilde r_m'(\u)\tilde h_m^2(\tilde f)_m'(u_m)\,d\u &=& 0 \,.
\end{eqnarray*}
Thus,
\begin{eqnarray*}
2^{d-1}B_{add}(\0) &=& \sum_{l\neq (j,k)}\int u_ju_k(\tilde f)_{j,k,l}(\u)\,du_jdu_k - \frac{d-1}{2}\int u_ju_k(\tilde f)_{j,k}(\u)\,du_jdu_k \\
&=& \frac{w^2}{18}\Big(\sum_{l\neq (j,k)} \frac{\frac{\partial^2}{\partial u_j\partial u_k}(f)_{j,k,l}(\x_0)}{(f)_{j,k,l}(\x_0)} - (d-1)\frac{\frac{\partial^2}{\partial u_j\partial u_k}(f)_{j,k}(\x_0)}{(f)_{j,k}(\x_0)}\Big)\big(1+o_p(1)\big)
\end{eqnarray*}

By Lemma~\ref{lem:sadd}, $E[\widehat{\tilde r}(\0)] = 2^dB_{add}(\0)$.

}

\noprint{
\newpage
\begin{lem} Let $t(\x) = 2^{d-1}(\S_{add}r)(\x)$. Then, 
\begin{eqnarray*}
\lefteqn{t^0(\x)}\\
&=&\sum_{k=1}^d\Big(\frac{1}{n}\sum_{i=1}^{n} K_{h_k}(X_{i,k},x_k)\Big) r_{add,k}(x_k) \\
& &+\sum_{j\neq k}\int \Big(\frac{1}{n}\sum_{i=1}^{n} K_{h_j}(X_{i,j},u)K_{h_k}(X_{i,k},x_k)\Big)r_{add,j}(u)\,du \\
& &-\frac{d-1}{2}\sum_{j=1}^d\int \Big(\frac{1}{n}\sum_{i=1}^{n} K_{h_j}(X_{i,j},u)\Big)r_{add,j}(u)\,du \\
& &+\sum_{k=1}^d\Big(\frac{1}{n}\sum_{i=1}^{n} K_{h_k}(X_{i,k},x_k)\Big(\frac{X_{i,k}-x_k}{ h_k}\Big)\Big)r_{add}^k(x_k) \\
& &+\sum_{j\neq k}\int\Big(\frac{1}{n}\sum_{i=1}^{n}K_{h_j}(X_{i,j},u)K_{h_k}(X_{i,k},x_k)\Big(\frac{X_{i,j}-x_j}{h_j}\Big)\Big)r_{add}^j(u)\,du \\
& &-\frac{d-1}{2}\sum_{j=1}^d\int\Big(\frac{1}{n}\sum_{i=1}^{n}K_{h_j}(X_{i,j},u)\Big(\frac{X_{i,j}-u}{h_j}\Big)\Big)r_{add}^j(u)\,du
\end{eqnarray*}
and for $k=1, \cdots, d$, 
\begin{eqnarray*}
\lefteqn{t^k(\x)} \\  
&=&\sum_{k=1}^d\Big(\frac{1}{n}\sum_{i=1}^{n} K_{h_k}(X_{i,k},x_k)\frac{X_{i,k}-x_k}{h_k}\Big) r_{add,k}(x_k) \\
& &+\sum_{j\neq k} \int \Big(\frac{1}{n}\sum_{i=1}^{n} K_{h_j}(X_{i,j},u)K_{h_k}(X_{i,k},x_k)\frac{X_{i,k}-x_k}{h_k}\Big)r_{add,j}(u)\,du \\
& &+\sum_{k=1}^d\Big(\frac{1}{n}\sum_{i=1}^{n} K_{h_k}(X_{i,k},x_k)\Big(\frac{X_{i,k}-x_k}{h_k}\Big)^2\Big)r_{add}^k(x_k) \\
& &+\sum_{j\neq k}\int\Big(\frac{1}{n}\sum_{i=1}^{n}K_{h_j}(X_{i,j},u)K_{h_k}(X_{i,k},x_k)\frac{X_{i,j}-u}{h_j}\frac{X_{i,k}-x_k}{h_k}\Big)r_{add}^j(u)\,du \,,
\end{eqnarray*}
where $r_{add} = \P_{add}r$ defined as in Lemma~\ref{lem:padd} and $\r_{add}^0 =\sum_{k=1}^d r_{add,k}$. 
\end{lem}

If $f$ is only assumed to be once continuously differentiable, then the order of bias is reduced to
\[
B^2\big(\hat r_{ladd}(\x_0)\big) = \max\{O(h^4), O(w^6)\} \,.
\]
Therefore, we have
\[
MSE = O(h^4 + w^6 + (nw^{d-1}h)^{-1})\,,
\]
Then the optimal relationship between $h$ and $w$ is given by $h \sim w^{3/2}$. As a result, we have
\[
MSE = O\big(w^6 + (nw^{d+1/2})^{-1}\big)\,,
\]
which leads to an optimal choice of $w$ and $h$ as
\[
w \sim n^{-1/(6.5+d)} \,,\quad h \sim n^{-1.5/(6.5+d)} \,,
\]
and 
\[
MSE \sim n^{-6/(6.5+d)} \,.
\]

\subsection{ Proof of Proposition~\ref{prop:laddprod}? }

\begin{itemize}
\item[(iii)] $f$ is twice continuously differentiable \\
\begin{eqnarray*}
\lefteqn{(a) = E[U_jU_kK_{\tilde h_l}(U_l,x_l)]} \\
&=& \int u_ju_kK_{\tilde h_l}(u_l,x_l)\tilde f_{j,k,l}(u_j,u_k,u_l)\,du_jdu_kdu_l \\
&=& \int u_ju_k\Big\{\!\int\!K(u_l)\frac{1}{2^3f(\x_0)}\Big(f(\x_0)+w\big(f_j'(\x_0)u_j + f_k'(\x_0)u_k \\
& & + f_l'(\x_0)(x_l + \tilde h_lu_l)\big) + O(w^2)\Big)\,du_l\Big\}\,du_jdu_k \\
&=& O(w^2) \\
\lefteqn{(b) = E[U_jU_k K_{\tilde h_k}(U_k,x_k)] } \\
&=& \int u_j\Big\{\!\int\!(x_k+\tilde h_lu_k)K(u_k)\frac{1}{4f(\x_0)}\Big(f(\x_0)+w\big(f_j'(\x_0)u_j \\
& &  + f_k'(\x_0)(x_k+\tilde h_ku_k)\big) + O(w^2)\Big)\,du_k\Big\}\,du_j \\
&=& \frac{wf_j'(\x_0)}{4f(\x_0)}\int u_j^2\,du_j + O(w^2) \\
&=& \frac{wx_k}{6}\frac{\frac{\partial}{\partial u_j}f_{j,k}''(\x_0)}{f_{j,k}(\x_0)} + O(w^2) \\
\lefteqn{(c) = E[U_jU_k] = O(w^2)} 
\end{eqnarray*}
\item[(ii)] $f$ is once continuously differentiable \\
\[
\tilde f(\u) = \frac{1}{2^d} + O(w) \,,
\]
and then
\[
(a) = O(w)\,,\quad (b) = O(w) \,,\quad (c) = O(w) \,.
\]
\item[(i)] $f$ is uniform \\
\[
\tilde f(\u) = \frac{1}{2^d} \,,
\]
and then
\[
(a)=(b)=(c)=0 \,.
\]
\end{itemize}
Therefore, the result follows.

\subsection{General design}

Assuming that $r$ is twice continuously differentiable, the regression function may be expressed as
\[
\tilde r(\u) = r(\x_0+w\u) = \mbox{additive part} + \frac{w^2}{2}\sum_{j\neq k} r_{j,k}''(\x^*)u_ju_k\,.
\]
If $r''$ is Lipschitz continuous, then
\[
|r_{j,k}''(\x^*) - r_{j,k}''(\x_0)| \leq C|\x^*-\x_0| = \tilde C W||\1|| \,,
\]
where $C$ and $\tilde C$ are constants. This implies that
\[
r_{j,k}''(\x^*) = r_{j,k}''(\x_0) + O(w)
\]
and thus we
Thus, for general density case, nonadditive bias is dominated by the term $u_ju_k$,
compared to the term $u_j^2u_k^2$ for uniform density. Because $r_{j,k}''(\x^*)$ is
bounded, it is enough to consider the bilinear function $u_ju_k$. 

\begin{lem} \label{lem:laddprod} Suppose that 
\[
b(\u) = u_ju_k
\]
and let $\hat b_{add,w}(\u)$ be the additive estimator based on design density $f_w=\tilde f(\u)$ given in ~(\ref{e:ftilde}). If $f$ is twice continuously differentiable, then
\begin{eqnarray*}
E[\hat b_{add,w}(\0)] &=& \frac{w^2}{9}\Big(\sum_{l\neq (j,k)} \frac{\frac{\partial^2}{\partial u_j\partial u_k}(f)_{j,k,l}(\x_0)}{(f)_{j,k,l}(\x_0)} - (d-1)\frac{\frac{\partial^2}{\partial u_j\partial u_k}(f)_{j,k}(\x_0)}{(f)_{j,k}(\x_0)}\Big)\big(1+o_p(1)\big)\,.
\end{eqnarray*}
Therefore, the nonadditive bias is given by
\[
B_{ladd}^{(2)}(\x_0) = \frac{w^4}{18}\sum_{j\neq k} r_{j,k}''(\x_0)\frac{\frac{\partial^2}{\partial u_j\partial u_k}(f)_{j,k,l}(\x_0)}{(f)_{j,k,l}(\x_0)} - (d-1)\frac{\frac{\partial^2}{\partial u_j\partial u_k}(f)_{j,k}(\x_0)}{(f)_{j,k}(\x_0)}\Big)\big(1+o_p(1)\big) \,.
\]
\end{lem}
\begin{proof} 
From the equation~(\ref{e:padd_e}), it can be written as
\[
E[\hat b_{add,w}(\u)] = \Big\{2\sum_l\Big(\frac{1}{\tilde n}\sum_i U_{i,j}U_{i,k}K_{\tilde h_l}(U_{i,l},u_l)\Big) - (d-1)\Big(\frac{1}{\tilde n}\sum_i U_{i,j}U_{i,k}\Big)\Big\}(1+o_p(1))\,.
\]
If $f''$ is Lipschitz continuous, then
\begin{eqnarray*}
\lefteqn{\tilde f_{j,k,l}(u_j,u_k,x_l+\tilde h_lu_l)} \\
&=& \frac{1}{2^3f(\x_0)}\Big\{f(\x_0) + w(f_j'(\x_0)u_j + f_k'(\x_0)u_k) \\
& & +\frac{w^2}{2}\big(f_{j,j}''(\x_0)u_j^2 + f_{k,k}''(\x_0)u_k^2 + 2f_{j,k}''(\x_0)u_ju_k \\
& & -\frac{1}{3}(f_{j,j}''(\x_0)+f_{k,k}''(\x_0)+f_{l,l}''(\x_0))\big)+ wf_l'(\x_0)(x_l+\tilde h_lu_l) \\
& & +\frac{w^2}{2}\big(f_{j,l}''(\x_0)u_j(x_l+\tilde h_lu_l)+ f_{k,l}''(\x_0)u_k(x_l+\tilde h_lu_l) + f_{l,l}''(\x_0)(x_l+\tilde h_lu_l)^2\big)\Big\} + O(w^3)\,.
\end{eqnarray*}
\begin{eqnarray*}
\lefteqn{E[U_jU_kK_{\tilde h_l}(U_l,x_l)]} \\
&=& \frac{w^2f_{j,k}''(\x_0)}{8f(\x_0)}\int u_j^2u_k^2\,du_jdu_k + O(w^3) \\
&=& \frac{w^2}{18}\frac{\frac{\partial^2}{\partial u_j\partial u_k}(f)_{j,k,l}(\x_0)}{(f)_{j,k,l}(\x_0)} + O(w^3) \\
\lefteqn{E[U_jU_kK_{\tilde h_k}(U_k,x_k)]} \\
&=&\int u_j\Big\{\!\int\! (x_k+\tilde h_ku_k)\frac{1}{4f(\x_0)}\Big(f(\x_0)+ w\big(f_j'(\x_0)u_j + f_k'(\x_0)(x_k+\tilde h_ku_k)\big) \\
& & + \frac{w^2}{2}\big(f_{j,j}''(\x_0)u_j^2 + 2f_{j,k}''(\x_0)u_j(x_k+\tilde h_ku_k) + f_{k,k}''(\x_0)u_k^2 \\
& & - \frac{1}{3}(f_{j,j}''(\x_0)+f_{k,k}''(\x_0))\big)\Big) + O(w^3) \\
&=& \frac{1}{f(\x_0)}\Big(\int u_j^2 \big(wx_kf_j'(\x_0) + w^2x_k^2f_{j,k}''(\x_0)\big)\,du_j\Big) + O(w^3)\\
&=& \frac{1}{6(f)_{j,k}(\x_0)}\Big(wx_k\frac{\partial}{\partial u_j}(f)_{j,k}(\x_0)+w^2x_k^2\frac{\partial^2}{\partial u_j\partial u_k}(f)_{j,k}(\x_0)\Big) + O(w^3) \\
\lefteqn{E[U_jU_k]} \\
&=& \frac{w^2}{9(f)_{j,k}(\x_0)}\frac{\partial^2}{\partial u_j\partial u_k}(f)_{j,k}(\x_0) + O(w^3)
\end{eqnarray*}
Because we estimate at $\u=0$, we have
\[
E[\hat b_{add,w}(\0)] = \frac{w^2}{9}\Big(\sum_{l\neq (j,k)} \frac{\frac{\partial^2}{\partial u_j\partial u_k}(f)_{j,k,l}(\x_0)}{(f)_{j,k,l}(\x_0)} 
 - (d-1)\frac{\frac{\partial^2}{\partial u_j\partial u_k}(f)_{j,k}(\x_0)}{(f)_{j,k}(\x_0)}\Big)(1+o_p(1)) \,.
\]
\end{proof}

Combined with Lemma~\ref{lem:ladd_badd}, we have
\[
B^2[\hat r_{ladd}(\x_0)] = \max\{O(h^4), O(w^8)\} \,,
\]
and Corollary~\ref{cor:cor1} and Proposition~\ref{prop:ladd_b2_var} are proved.

Similarly, the coefficients for the optimal smoothing parameters are derived.

\paragraph{ Proof of Proposition~\ref{prop:ladd_w}}[???]
\noprint{
\begin{prop} 
Assume that $f$ is twice continuously differentiable.
Assume that $h_1=\cdots=h_d$ and let $h = C_hw^2$. The smoothing parameter $w$ that minimizes asymptotic MSE is given by
\[
w \sim \Big(\frac{C_1}{C_2}\Big)^{\frac{1}{9+d}} n^{-\frac{1}{9+d}} \,,
\]
where
\begin{eqnarray*}
C_1 &=& \mu_0(K^2)\sigma^2 d(d+1) \\
C_2 &=& 4C_h\Big\{\frac{C_h^2\mu_2(K)}{2}\sum_{j=1}^d r_{j,j}''(\x_0) + \frac{1}{18}\sum_{j\neq k} r_{j,k}''(\x_0)\Big(\sum_{l\neq (j,k)} \frac{\frac{\partial^2}{\partial u_j\partial u_k}(f)_{j,k,l}(\x_0)}{(f)_{j,k,l}(\x_0)} \\
& & - (d-1)\frac{\frac{\partial^2}{\partial u_j\partial u_k}(f)_{j,k}(\x_0)}{(f)_{j,k}(\x_0)}\Big)\Big\}^2 \,.
\end{eqnarray*}
\end{prop}
}

\section*{Appendix: Proof of Proposition~\ref{prop:aic}}

\begin{lem}
\[
tr((H^\prime H)^\prime(H^\prime H))=O\big(tr(H^\prime H)\big)=O\big(1/(w^{d-1}h)\big)
\]
\end{lem}

{\bf Proof:} Denote by $H_i$ the hat matrix of the additive estimator used for local
additive estimation at $x_0=X_i$.
Then, inflating the matrix to an $n\times n$ matrix, the $i$th line of $H$ is the $i$th
line of $H_i$ := $H_{i,i}$\,.
Now, considering the form of the estimator $\hat{r}_i = \hat{r}_0+ \hat{r}_1+ \ldots +
\hat{r}_d$, where all components are oracle, we have 
\[
H_{i,j} = \left\{\begin{array}{cc}
O(\frac{1}{\tilde n\tilde h}) & \text{ if for all } k:|X_{i_k}-X_{j_k}|\leq w \text{ and
for some } k: |X_{i_k}-X_{j_k}|\leq h \\
O(\frac{1}{\tilde n}) & \text{ if for all } k: |X_{i_k}-X_{j_k}|\leq w \text{ and for 
all } k: |X_{i_k}-X_{j_k}|> h\\
0 & \text{ otherwise}\end{array}\right.
\]
Note, that these $O()$s are uniform over $X$ because of (A.5), using Gao (2003) as in
Studer et al. (2005).
Let's first look at $tr(H'H)$.
We have 
\[
H_{i,i}' H_{i,i} = O\left(O(\frac{1}{\tilde n\tilde h})^2  O(\tilde n\tilde h) +
O(\frac{ 1}{ \tilde n})^2 O(\tilde n)\right) = O\left(\frac{1}{\tilde n\tilde h} \right) =
O\left(\frac{1}{nw^{d-1}h} \right). 
\]
Consequently, 
\[ 
tr(H'H) =  O\left(\frac{1}{w^{d-1}h} \right). 
\]

Now, look at the general elements of $H'H$.
With a slight abuse of notation, 
\begin{eqnarray*} 
H_{i,i}'H_{j,j} &=& O\left(O(\frac{1}{\tilde n\tilde h})^2 
O(\tilde n\tilde h) + O(\frac{1}{\tilde n\tilde h})O(\frac{ 1}{ \tilde n})  O(\tilde
n\tilde h) + O(\frac{ 1}{ \tilde n})^2 O(\tilde n)\right) \\
& &\qquad\text{ if for all } k: (X_{i_k}\pm w)\cap (X_{j_k}\pm w) \not= \varnothing \\
 & &\qquad\text{ and for some } k:   (X_{i_k}\pm h)\cap (X_{j_k}\pm h)\not= \varnothing 
\\
& & O\left(O(\frac{1}{\tilde n\tilde h})O(\frac{ 1}{ \tilde n})  O(\tilde n\tilde h) +
O(\frac{ 1}{ \tilde n})^2 O(\tilde n)\right)\\
& & \qquad\text{ if for all } k:(X_{i_k}\pm w)\cap (X_{j_k}\pm w) \not= \varnothing \\
 & & \qquad\text{ and for some } k:   (X_{i_k}\pm w)\cap (X_{j_k}\pm h)\not= \varnothing 
\\
 & & \qquad\text{ or }   (X_{i_k}\pm h)\cap (X_{j_k}\pm w)\not= \varnothing  \\
 & & O\left(O(\frac{ 1}{ \tilde n})\right) \qquad\text{ if for all } k: (X_{i_k}\pm
w)\cap (X_{j_k}\pm w) \not= \varnothing \\
& & \qquad \text{ and } (X_{i_k}\pm w)\cap (X_{j_k}\pm h)= \varnothing \text{ and }
(X_{i_k}\pm h)\cap (X_{j_k}\pm w)= \varnothing  \\
 & & 0 \quad\text{ otherwise} \,.
\end{eqnarray*} 
Finally,
\begin{eqnarray*} 
H_{i,i}'H_{j,j} &=& O(\frac{1}{\tilde n\tilde h})   \text{ if for all } k: (X_{i_k}\pm
w)\cap (X_{j_k}\pm w) \not= \varnothing \\
 & & \qquad \text{ and for some } k:   (X_{i_k}\pm h)\cap (X_{j_k}\pm h)\not= \varnothing 
\\
 & & O(\frac{ 1}{ \tilde n})  \text{ if for all } k: (X_{i_k}\pm w)\cap (X_{j_k}\pm w)
\not= \varnothing  \\
& & \qquad \text{ and for all } k:   (X_{i_k}\pm h)\cap (X_{j_k}\pm h)= \varnothing \\
 & & 0 \text{ otherwise}\,.
\end{eqnarray*} 
Thus, $H'H$ has the same structure as $H$, of course with different constants and larger
non-zero regions, but all of the same order. Therefore, 
\[
tr(H'HH'H) = O(tr(H'H)) = O\left(\frac{1}{w^{d-1}h} \right). 
\]

\begin{lem}
\[
\frac{1}{n\sigma^2}\verr^\prime(I-H)^\prime(I-H)\r = \frac{1}{n\sigma^2}<(I-H)\verr,
(I-H)\r> = O_p\big(\frac{h^2+w^4}{\sqrt{n}}\big)
\]
\end{lem}

{\bf Proof:} First note that $(I-H)\r  = O((h^2+w^4)\one)$. 
\begin{eqnarray*}
\lefteqn{\frac{1}{n\sigma^2}<(I-H)\verr, (I-H)\r>} \\ 
&\leq& \frac{1}{n\sigma^2}||(I-H)\verr||||(I-H)\r|| \\
&\leq& \frac{1}{n\sigma^2}O_p(\sqrt{n})O((h^2+w^4)\one) =
O_p\big(\frac{h^2+w^4}{\sqrt{n}}\big)
\end{eqnarray*}
where the last inequality follows from
\[
||(I-H)\verr||^2 = O_p(tr(V[||(I-H)\verr||^2)) = O_p(tr((I-H)(I-H)^\prime\sigma^2) =
O_p(n) \,.
\]

\begin{lem}
\[
V[\verr^\prime H\verr] = V[<\verr, H\verr>] = O(E[||H\verr||^2]) 
\]
If $tr((H^\prime H)(H^\prime H))=O(tr(H^\prime H)$, then 
\[
V[\verr^\prime H^\prime H\verr] = V[||H\verr||^2]= O(E[||H\verr||^2])
\]
\end{lem}

{\bf Proof:} Similar to the proof of lemma 5 in the appendix of Studer et al. (2005), we
use the following fact: For symmetric matrices $B$ and $C$ and $E[\verr^4] =
(3+\kappa)\sigma^4$,
\[
Cov(\verr^\prime B\verr, \verr^\prime C\verr) = 2\sigma^4tr(BC) + \kappa\sigma^4tr(B\cdot 
diag(C))
\]
Putting $B=C=\frac{1}{2}(H+H^\prime)$ gives
\begin{eqnarray*}
V[\frac{1}{2}\verr^\prime(H+H^\prime)\verr] &=& 
2\sigma^4(\frac{1}{4}tr(HH+2H^\prime H+H^\prime H^\prime) + \kappa\sigma^4
tr(\frac{1}{4}(H^\prime+H) diag(H+H^\prime)) \\
V[\verr^\prime H\verr] &=& \sigma^4(tr(HH + H^\prime H) + \kappa\,tr(diag(H)^2) \\
&\leq& \sigma^4(tr(HH)+tr(H^\prime H) + |\kappa| tr(H^\prime H)
\end{eqnarray*}
Using the equivalence of the trace to Hilbert-Schmidt norm,
\[
||H||_{HS}^2 = tr(H^\prime H)
\]
it follows that
\[
tr(HH) = <H^\prime, H>_{HS} \leq ||H^\prime||_{HS}||H||_{HS} = ||H||_{HS}^2 =
tr(H^\prime H) \,.
\]
Hence,
\begin{eqnarray*}
V[\verr^\prime H\verr] &\leq& \sigma^4(2tr(H^\prime H)+|\kappa| tr(H^\prime H) ) \\
&=& \sigma^4(2+|\kappa|) tr(H^\prime H) \\
&=& \sigma^2(2+|\kappa|) E[||H\verr||^2]
\end{eqnarray*}
Thus, $V[<\verr, H\verr>] = O(E[||H\verr||^2])$. 
Moreover, replacing $H$ by $H^\prime H$ in the above leads to
\begin{eqnarray*}
V[\verr^\prime H^\prime H\verr] &\leq& \sigma^2(2+|\kappa|)
tr((H^\prime H)^\prime(H^\prime H)) \,.
\end{eqnarray*} 

Hence, if $tr((H^\prime H)^\prime(H^\prime H))=O(tr(H^\prime H))$, then $V[||H\verr||^2]
= O(tr(H^\prime H)) = O(E[||H\verr||^2])$.

\begin{lem}\label{lem:supnorm}
\[
||H^\prime H||_{\sup}=O_p(1)
\]
\end{lem}

{\bf Proof:} With probability tending to 1, $\S_{add}^{-1}$ is continuous, thus, $\hat
r_{add}$ is stable. Therefore $||H^\prime H||_{\sup} = O_p(1)$.

{\bf Comment:} $H = \S_{add}^{-1}\P_{add}\S_*$ so the additive estimator $H\Y$ is a
sub-vector of $\S_{add}^{-1}$ times some projection. Thus, bounded eigenvalues of
$\S_{add}^{-1}$ ensure that $H$ has bounded eigenvalues. Now we want to show that
$H_{ladd}$ has bounded maximal eigenvalue. Note that the $i$th row of $H_{ladd}$ is the
$i$th row of an additive estimator, $H^{(i)}$, applied to a sub-vector of $\Y$. As a
consequence, the maximal eigenvalue of $H_{ladd}$ is bounded by the maximum of all those
$H^{(i)}$s. But all those additive estimators converge to a limiting operator.

{\bf Comment:} I don't quite understand how the argument of inverse operator is suddenly
connected to the supreme norm of $H^\prime H$, which I believe is the leading
eigenvalue of the matrix $H^\prime H$. I can see that $H$ can be viewed as a function of
the inverse operator but has more component than that. So where does $H^\prime H$ come
from? Why is this true? Would the same be true for local additive estimator?

\begin{lem}
\[
\frac{1}{n\sigma^2}\verr^\prime(I-H)^\prime(I-H)\r = \frac{1}{n\sigma^2}<(I-H)\verr,
(I-H)\r_{add}> = O_p\big(\frac{h^2}{\sqrt{n}}\big)
\]
\end{lem}

{\bf Proof:} First note that $(I-H)\r_{add}  = O(h^2\one)$. 
\begin{eqnarray*}
\lefteqn{\frac{1}{n\sigma^2}<(I-H)\verr, (I-H)\r_{add}>} \\ 
&\leq& \frac{1}{n\sigma^2}||(I-H)\verr||||(I-H)\r_{add}|| \\
&\leq& \frac{1}{n\sigma^2}O_p(\sqrt{n})O(h^2\one) = O_p\big(\frac{h^2}{\sqrt{n}}\big)
\end{eqnarray*}
where the last inequality follows from
\[
||(I-H)\verr||^2 = O_p(tr(V[||(I-H)\verr||^2)) = O_p(tr((I-H)(I-H)^\prime\sigma^2) =
O_p(n) \,.
\]

{\bf Comment:} For local additive estimator with general regression function, the term
will be replaced by
\[
\frac{1}{n\sigma^2}<(I-H)\verr, (I-H)\r> = O_p\big(\frac{h^2+w^4}{\sqrt{n}}\big) \,.
\]

}

\end{document}